\definecolor{MyGreen}{rgb}{0, 0.7, 0}
\definecolor{MyRed}{rgb}{0.8, 0, 0}
\DeclarePairedDelimiter\ceiling{\lceil}{\rceil} 
\DeclarePairedDelimiter\floor{\lfloor}{\rfloor} 
\theoremstyle{plain}
\newtheorem{theorem}{Theorem}[section]
\newtheorem{lemma}[theorem]{Lemma}
\newtheorem{proposition}[theorem]{Proposition}
\newtheorem{example}[theorem]{Example}
\newtheorem{claim}[theorem]{Claim}
\theoremstyle{definition}
\newtheorem{definition}[theorem]{Definition}
\theoremstyle{remark}
\newtheorem{remark}[theorem]{\upshape\bfseries Remark}
\newcommand{\E}{\mathbb{E}}
\newcommand{\indicator}[1]{\mathbbm{1}_{\left\{#1\right\}}\xspace}
\newcommand{\committeeSize}{k}
\newcommand{\approvalBallotOfAgent}[1]{A_{#1}} 
\newcommand{\randomizedCommittee}{\mathbf{X}}
\newcommand{\GCR}{\text{GCR}\xspace}
\newcommand{\MES}{\text{MES}\xspace}
\newcommand{\EJR}{\text{EJR}\xspace}
\newcommand{\EJRplus}{\text{EJR+}\xspace}
\newcommand{\RD}{\text{MES}\xspace}
\newcommand{\budgetMES}{b}
\newcommand{\BWMES}{BW-MES\xspace}
\newcommand{\BWMESFull}{Best-of-Both-Worlds MES\xspace}
\newcommand{\BWGCR}{BW-GCR\xspace}
\newcommand{\AllocFromShares}{\textsc{AllocationFromShares}\xspace}
\title{Best-of-Both-Worlds Fairness in Committee Voting}
\author{
Haris Aziz \and
Xinhang Lu \and
Mashbat Suzuki \and
Jeremy Vollen \and
Toby Walsh \and \\
UNSW Sydney \\
\{haris.aziz, xinhang.lu, mashbat.suzuki, j.vollen, t.walsh\}@unsw.edu.au}
\date{}
\begin{document}
\maketitle

\begin{abstract}
The best-of-both-worlds paradigm advocates an approach that achieves desirable properties both ex-ante and ex-post.
We launch a best-of-both-worlds fairness perspective for the important social choice setting of approval-based committee voting.
To this end, we initiate work on ex-ante proportional representation properties in this domain and formalize a hierarchy of notions including Individual Fair Share (IFS), Unanimous Fair Share (UFS), Group Fair Share (GFS), and their stronger variants.
We establish their compatibility with well-studied ex-post concepts such as extended justified representation (EJR) and fully justified representation (FJR).
Our first main result is a polynomial-time algorithm that simultaneously satisfies ex-post EJR, ex-ante GFS and ex-ante Strong UFS.
Subsequently, we strengthen our ex-post guarantee to FJR and present an algorithm that outputs a lottery which is ex-post FJR and ex-ante Strong UFS, but does not run in polynomial time.

\end{abstract}

\section{Introduction}
\label{sec:intro}

Fairness is one of the central concerns when aggregating the preferences of multiple agents.
Just as envy-freeness is viewed as a central fairness goal when allocating resources among agents~\citep{Fole67,Moul19a}, proportional representation is a key fairness desideratum when making collective choice such as selecting a set of alternatives~\citep{LaSk23a}.
However, in both contexts, fairness is often too hard to achieve perfectly as an outcome satisfying their respective fairness notions may not exist.

Two successful approaches to counter the challenge of non-existence of fair outcomes are \emph{relaxation} and \emph{randomization}.
The idea of relaxation is to weaken the ideal notion of fairness enough to get meaningful and guaranteed existence of fair outcomes.
In the resource allocation context, a widely pursued relaxation of envy-freeness is envy-freeness up to one item (EF1)~\citep{LMMS04,Budish11}.
In the social choice context of approval-based committee voting, \emph{core} is viewed as the strongest proportional representation concept.
Since it is not known whether a core stable outcome is guaranteed to exist, researchers have focused on natural relaxations of the core that are guaranteed to exist and are based on the idea of \emph{justified representation}~\citep{ABC+17}.

The second approach to achieve fairness is via randomization that specifies a probability distribution (or \emph{lottery}) over ex-post integral outcomes.
Randomization is one of the oldest tools to achieve fairness and has been applied to contexts such as resource allocation~\citep{BoMo01a} and collective choice~\citep{BMS05a}.

In most of the work on resource allocation and social choice, the two major approaches of relaxation and randomization are pursued separately.
The concerted focus on pursuing both approaches simultaneously to achieve good fairness guarantees is a recent phenomenon~\citep{Aziz19a}, and has been referred to as the \emph{best-of-both-worlds} paradigm~\citep{FSV20b}.

While the best-of-both-worlds approach has been applied to fair resource allocation, it has not been explored as much in the contexts of social choice and public decision-making.
Furthermore, this perspective has never been taken with respect to approval-based committee voting, which is the setting considered in this paper.

In approval-based committee voting, each voter approves of a subset of candidates.
Based on these expressed approvals, a committee (i.e., a subset of candidates) of a target size is selected.
Almost all of the papers on fairness in approval-based committee voting focus on ex-post fairness guarantees such as \emph{justified representation (JR)}, \emph{proportional justified representation (PJR)}~\citep{SEL+17}, \emph{extended justified representation (EJR)}~\citep{ABC+17}, and \emph{fully justified representation (FJR)~\citep{PPS21a}}.
For instance, EJR says that for each positive integer~$\ell$, if a group of at least $\ell \cdot n / \committeeSize$ voters approve at least~$\ell$ common candidates, some voter in the group must have at least~$\ell$ approved candidates in the committee.
Here, $n$ is the number of voters and $\committeeSize$ is the target committee size.
To the best of our knowledge, the only work that uses randomization to obtain ex-ante fairness in approval-based committee voting is that of \citet{CJMW20}.
This work, however, ignores ex-post fairness guarantees.

We initiate a best-of-both-worlds fairness perspective in the context of social choice, particularly approval-based committee voting.
We first motivate our approach by noting that without randomization, one cannot guarantee that each voter get strictly positive expected representation, and thus fails a property known as \emph{positive share}.\footnote{\label{ft:integral-not-give-positive-share}
This can be seen from a simple example with a target committee size of one and two voters with disjoint approvals.}
In this paper, we will define a hierarchy of ex-ante fairness properties stronger than positive share and give a class of randomized algorithms which achieve these properties in addition to the existing ex-post fairness properties.

\subsection{Our Contributions}

\begin{figure}[t]
\centering
\scalebox{0.9}{
\begin{tikzpicture}
\tikzstyle{onlytext}=[]
\tikzset{venn circle/.style={circle,minimum width=0mm,fill=#1,opacity=0.1}}

\node[onlytext] (ex-ante fair) at (1.5,0) {\begin{tabular}{c}\textbf{ex-ante}\\\textbf{fairness}\end{tabular}};
\draw[-, line width=1pt] (-1,-0.5) -- (8.5,-0.5) ;

\node[onlytext] (ex-ante GFS) at (0,-1.5) {\begin{tabular}{c}{ex-ante}\\{GFS}\end{tabular}};
\node[onlytext] (ex-ante UFS) at (0,-3.5) {\begin{tabular}{c}{ex-ante}\\{UFS}\end{tabular}};
\node[onlytext] (ex-ante IFS) at (0,-5.5) {\begin{tabular}{c}{ex-ante}\\{IFS}\end{tabular}};

\node[onlytext] (ex-ante sUFS) at (3,-1.5) {\begin{tabular}{c}{ex-ante}\\{Strong UFS}\end{tabular}};
\node[onlytext] (ex-ante sIFS) at (3,-3.5) {\begin{tabular}{c}{ex-ante}\\{Strong IFS}\end{tabular}};

\node[onlytext] (ex-post fair) at (7,0) {\begin{tabular}{c}\textbf{ex-post}\\\textbf{fairness}\end{tabular}};

\node[onlytext] (ex-post FJR) at (7,-1.5) {\begin{tabular}{c}{ex-post}\\{FJR}\end{tabular}};
\node[onlytext] (ex-post EJR) at (7,-3) {\begin{tabular}{c}{ex-post}\\{EJR}\end{tabular}};
\node[onlytext] (ex-post PJR) at (7,-4.5) {\begin{tabular}{c}{ex-post}\\{PJR}\end{tabular}};
\node[onlytext] (ex-post JR) at (7,-6) {\begin{tabular}{c}{ex-post}\\{JR}\end{tabular}};

\draw[->, line width=1pt] (ex-ante GFS) -- (ex-ante UFS);
\draw[->, line width=1pt] (ex-ante UFS) -- (ex-ante IFS);

\draw[->, line width=1pt] (ex-ante sUFS) -- (ex-ante sIFS);
\draw[->, line width=1pt] (ex-ante sUFS) -- (ex-ante UFS);
\draw[->, line width=1pt] (ex-ante sIFS) -- (ex-ante IFS);

\draw[->, line width=1pt] (ex-post FJR) -- (ex-post EJR);
\draw[->, line width=1pt] (ex-post EJR) -- (ex-post PJR);
\draw[->, line width=1pt] (ex-post PJR) -- (ex-post JR);

\begin{pgfonlayer}{background}
\draw[line width=26pt,green!20,line cap=round,rounded corners] (-0.5,-1.5) -- (3.5,-1.5) -- (6,-3) -- (7.5,-3);
\draw[-,cyan, dotted, line width=1.5pt, rounded corners, line cap=round] (1.9,-1) -- (7.9,-1) -- (7.9,-2) -- (1.9,-2) -- (1.9,-1);

\draw[line width=20pt,green!20,line cap=round,rounded corners] (10.5,-2.5) -- (11.5,-2.5);
\node[onlytext] () at (13,-2.5) {\begin{tabular}{c}\Cref{thm:GFS+EJR}\\(\BWMES)\end{tabular}};

\draw[-,cyan, dotted, line width=1.5pt, rounded corners, line cap=round] (10.25,-3.75) -- (11.75,-3.75) -- (11.75,-4.25) -- (10.25,-4.25) -- (10.25,-3.75);
\node[onlytext] () at (13,-4) {\begin{tabular}{c}\Cref{thm:sUFS+FJR}\\(\BWGCR)\end{tabular}};
\end{pgfonlayer}
\end{tikzpicture}
}
\caption{Visualization of ex-ante and ex-post fairness hierarchies studied in this paper.
An arrow from (A) to (B) denotes that (A) implies (B).
We highlight those properties which can be satisfied simultaneously and point to the corresponding theorems in the key.}
\label{fig:relations}
\end{figure}

Our first contribution is to broaden the best-of-both-worlds fairness paradigm, which has so far been limited to resource allocation, and explore it in the context of social choice problems, specifically committee voting.
Whereas the relaxed notions of ex-post fairness have been examined at great length for committee voting, the literature on ex-ante fairness is less developed.

In \Cref{sec:fairness-fractional-committees}, we formalize several natural axioms for ex-ante fairness that are careful extensions of similar concepts proposed in the restricted setting of single-winner voting.
These include the following concepts in increasing order of strength as well as stronger variants: \emph{positive share},
\emph{individual fair share (IFS)}, \emph{unanimous fair share (UFS)}, and \emph{group fair share (GFS)}.
For instance, positive share simply requires that each voter expects to have non-zero probability of having some approved candidate selected.
At the other end of the spectrum, GFS gives a desirable level of ex-ante representation to \emph{every} coalition of voters.
In \Cref{fig:relations}, we present ex-ante and ex-post fairness hierarchies and establish logical relations between them.

In line with the goals of the best-of-both-worlds fairness paradigm, our central research question is to understand which combinations of ex-ante and ex-post fairness properties can be achieved simultaneously.
In \Cref{sec:GFS+EJR}, we show that ex-ante GFS, ex-ante Strong UFS and ex-post EJR can be achieved simultaneously by devising a class of randomized algorithms.
Our class of algorithms, which we call \emph{\BWMESFull} (\emph{\BWMES}), uses as a subroutine the well-known Method of Equal Shares (MES)~\citep{PeSk20a}.
Furthermore, a lottery satisfying GFS, Strong UFS and EJR can be computed in polynomial time via an algorithm belonging to \BWMES.

Finally, we examine whether it is possible to achieve a stronger ex-post guarantee while maintaining some ex-ante fairness properties.
In \Cref{sec:BWGCR}, we answer this question affirmatively by presenting an algorithm (referred to as \BWGCR) which outputs a lottery that is ex-post FJR and ex-ante Strong UFS.
Our algorithm combines both the Greedy Cohesive Rule~\citep{PPS21a} and the BW-MES rule.

\subsection{Related Work}

In this paper, we examine approval-based committee voting, a generalization of the classical voting setting which has been studied at length, particularly from the 19th century to the present.
One of the persistent questions within the committee voting setting is how to produce committees which proportionally represent groups of voters.
\citet{ABC+17} initiated an axiomatic study of approval-based committee voting based on the idea of ``justified representation'' for cohesive groups.
The study has led to a hierarchy of axioms and a large body of work focusing on voting rules which produce committees satisfying these axioms and thus give some guarantee of fair representation \citep{AEH+18a,EFI+22,BFJL23}.
For a detailed survey of the recent work on approval-based committee voting, we refer the readers to the book of \citet{LaSk23a}.
While this paper also targets committees satisfying these properties, we examine outcomes that are randomized committees which specify a probability distribution over integral committees.

In social choice theory, randomization is one of the oldest tools used to achieve stronger fairness properties and to bypass various impossibility results which apply to discrete outcomes~\citep[see, e.g.,][]{Gibb77a}.
For single-winner randomized voting (also known as \emph{probabilistic voting}) with approval preferences, \citet{BMS05a} defined ex-ante fairness notions, the individual fair share (IFS) and unanimous fair share (UFS), and provide rules satisfying them.
They also proposed a group fairness property called group fair share (GFS)~\citep[see][]{BMS02a}, independently proposed by \citet{Dudd15a}, which is stronger than UFS and IFS but weaker than core fair share (CFS), a group fairness and stability property inspired by that of \emph{core} from cooperative game theory \citep{Scar67a}.
In a setting which generalizes probabilistic voting by allowing arbitrary endowments, \citet{BBPS21} studied fair distribution rules and introduce GFS, which they show is equivalent to a notion called decomposability.
None of their results imply those presented in this paper since their setting places no restriction on the distribution a single alternative can receive our setting does.

\citet{MPS20a} explored the trade-off between group fairness and utilitarian social welfare by measuring the ``price of fairness'' with respect to fairness axioms such as IFS and GFS.
We formulate these ex-ante fairness properties for the more general committee voting setting for the first time.
As mentioned, we search for outcomes which also give fair representation to groups ex-post, a desideratum which has no analogue in the classical voting setting.

\citet{Aziz19a} proposed research directions regarding probabilistic decision making with desirable ex-ante and ex-post stability or fairness properties.
\citet{FSV20b} were the first to coin the term ``best-of-both-worlds fairness''.
They examined the compatibility of achieving ex-ante envy-freeness and ex-post near envy-freeness in the context of resource allocation.
There have been several recent works on best-of-both-worlds fairness in resource allocation~\citep{HPPS20a,BEF21a,BEF22,AFS+23a,AGM23a,HSV23a,FMNP23}.
Other works that consider the problem of implementing a fractional allocation over deterministic allocations subject to constraints include~\citep{BCKM12a,AkNi20a}.

\section{Preliminaries}
\label{sec:prelim}

For any positive integer~$t \in \mathbb{N}$, let $[t] \coloneqq \{1, 2, \dots, t\}$.
Let $C = [m]$ be the set of \emph{candidates} (also called \emph{alternatives}).
Let $N = [n]$ be the set of voters.
We assume that the voters have \emph{approval} (also known as \emph{dichotomous} or \emph{binary}) preferences, that is, each voter~$i \in N$ approves a non-empty ballot~$\approvalBallotOfAgent{i} \subseteq C$.
We denote by~$N_c$ the set of voters who approve of candidate~$c$, i.e., $N_c\coloneqq \{i \in N \mid c \in \approvalBallotOfAgent{i}\}$.
An \emph{instance}~$I$ can be described by a set of candidates~$C$, a list of ballots $\mathcal{A} = (\approvalBallotOfAgent{1}, \approvalBallotOfAgent{2}, \dots, \approvalBallotOfAgent{n})$, and a positive committee size~$\committeeSize \leq m$ which is an integer.

\paragraph{Integral and Fractional Committees}
As is standard in committee voting, an \emph{(integral) winning committee}~$W$ is a subset of~$C$ having size~$\committeeSize$.
A \emph{fractional committee} is specified by an $m$-dimensional vector~$\vec{p} = (p_c)_{c \in C}$ with $p_c \in [0, 1]$ for each~$c \in C$, and $\sum_{c \in C} p_c = \committeeSize$.
Note an integral committee~$W$ can be alternatively represented by the vector~$\vec{p}$ in which $p_c = 1$ for all~$c \in W$ and $p_c = 0$ otherwise.
For notational convenience, let $\vec{1}_W \in \{0, 1\}^m$, whose $j^{\text{th}}$ component is~$1$ if and only if~$j \in W$, be the vector representation of an integral committee~$W$.
The utility of voter~$i \in N$ for a (fractional or integral) committee~$\vec{p}$ is given by $u_i(\vec{p}) \coloneqq \sum_{c \in \approvalBallotOfAgent{i}} p_c$.

\paragraph{Randomized Committees}
A \emph{randomized committee}~$\randomizedCommittee$ is a lottery over integral committees and specified by a set of~$s \in \mathbb{N}$ tuples $\{(\lambda_j, W_j)\}_{j \in [s]}$ with $\sum_j \lambda_j = 1$, where for each~$j \in [s]$, the integral committee $W_j \subseteq C$ is selected with probability $\lambda_j \in [0, 1]$.
The \emph{support} of~$\randomizedCommittee$ is the set of integral committees $\{W_1, W_2, \dots, W_s\}$.
Unless specified otherwise, when we simply say ``a committee'', it will mean an integral committee.

A randomized committee $\{(\lambda_j, W_j)\}_{j \in [s]}$ is an \emph{implementation} of (or ``implements'') a fractional committee~$\vec{p}$ if $\vec{p} = \sum_{j \in [s]} \lambda_j \vec{1}_{W_j}$.
Note that there may exist many implementations of any given fractional committee.
Given a randomized committee $\randomizedCommittee = (\lambda_j, W_j)_{j \in [r]}$ implementing a fractional outcome~$\vec{p}$, we can interpret the fractional utility as the expected utility of the randomized committee, i.e., $u_i(\vec{p}) = \E_{W \sim \randomizedCommittee}[u_i(W)]$.

The fact that any fractional committee can be implemented by a probability distribution over integral committees of the same size is implied by various works on randomized rounding schemes in combinatorial optimization~\citep{Grim04a,GKPS06,ALM+19a}.
We explain this connection explicitly using the classical result of \citet{GKPS06} and frame it in our context.
Theorem~2.3 of \citet{GKPS06} states that there is a polynomial-time rounding scheme to \emph{sample} an integral committee from a randomized committee satisfying three properties.
The first property ensures that the randomized committee is a valid implementation of the fractional committee.
The second property ensures that each committee in the support of the implementation are of size~$\committeeSize$.
We do not need the third property for our purposes.
While one can \emph{sample} an outcome from a support in polynomial time, the support could have exponential size.
Since we want an explicit construction of the randomized committee, such a rounding scheme does not run in polynomial time due to the exponential-size output.
In order to have polynomial-time computation, we resort to randomized rounding schemes which output an explicit distribution over a support of polynomial size, for example, the \AllocFromShares of \citet{ALM+19a}.

\subsection{Fairness for Integral Committees}

Fairness properties for integral committees are well-studied in committee voting.
A desideratum that has received significant attention is \emph{justified representation (JR)}~\citep{ABC+17}.
In order to reason about JR and its strengthenings, an important concept is that of a cohesive group.
For any positive integer~$\ell$, a group of voters $N' \subseteq N$ is said to be \emph{$\ell$-cohesive} if $|N'| \geq \ell \cdot n/k$ and $\left| \bigcap_{i \in N'} \approvalBallotOfAgent{i} \right| \geq \ell$.

\begin{definition}[JR]
\label{def:JR}
A committee~$W$ is said to satisfy \emph{justified representation (JR)} if for every $1$-cohesive group of voters~$N' \subseteq N$, it holds that $\approvalBallotOfAgent{i} \cap W \neq \emptyset$ for some~$i \in N'$.
\end{definition}

Two important strengthenings of JR have been proposed.

\begin{definition}[PJR~\citep{SEL+17}]
\label{def:PJR}
A committee~$W$ is said to satisfy \emph{proportional justified representation (PJR)} if for every positive integer~$\ell$ and every $\ell$-cohesive group of voters~$N' \subseteq N$, it holds that $\left| \left( \bigcup_{i \in N'} \approvalBallotOfAgent{i} \right) \cap W \right| \geq \ell$.
\end{definition}

\begin{definition}[EJR~\citep{ABC+17}]
\label{def:EJR}
A committee~$W$ is said to satisfy \emph{extended justified representation (EJR)} if for every positive integer~$\ell$ and every $\ell$-cohesive group of voters~$N' \subseteq N$, it holds that $|\approvalBallotOfAgent{i} \cap W| \geq \ell$ for some~$i \in N'$.
\end{definition}

It follows directly from the definitions that EJR implies PJR, which in turn implies JR.
A committee providing EJR (and therefore PJR and JR) always exists and can be computed in polynomial time~\citep{ABC+17,PeSk20a}.

\section{Fairness for Fractional Committees}
\label{sec:fairness-fractional-committees}

In this section, we first introduce fairness properties for fractional committees, after which we establish the relations between these fractional fairness notions and the integral fairness notions presented in \Cref{sec:prelim}.

\subsection{Fairness Concepts}

Whereas the literature on fairness concepts for integral committees is very well-developed, fairness properties for fractional committees are largely unexplored except for the special case of single-winner voting~\citep{BMS05a,Dudd15a,ABM20a}.
We introduce a hierarchy of fairness notions for fractional committees in the committee voting setting by generalizing axioms from the single-winner context based on \emph{fair share}.
The weakest in the hierarchy of axioms is \emph{individual fair share (IFS)}, the idea behind which is that ``each one of the $n$ voters `owns' a $1/n$-th share of decision power, so she can ensure an outcome she likes with probability at least~$1/n$'', as \citet[page~18:2]{ABM20a} put it.
This idea suggests at least two distinct interpretations of the utility lower bound guaranteed by IFS:
\begin{enumerate}[label=(\alph*)]
\item\label{enum:control-integral-outcome}
each voter is given $1/n$ probability to choose their favourite integral outcome, or

\item\label{enum:control-fractional-outcome}
each voter can select $1/n$ of the (fractional) outcome.
\end{enumerate}
In probabilistic voting, as we will see shortly, both interpretations coincide.
Critically, this is not the case in the committee voting setting. Instead, these interpretations diverge and lead to two alternative hierarchies of fair share axioms for committee voting, which we term \emph{fair share} and \emph{strong fair share}, respectively.

We begin by defining both generalizations of individual fair share (IFS). Both impose a natural lower bound on individual utilities stronger than that of \emph{positive share}, which requires that $u_i(\vec{p}) > 0$.
In the single-winner setting, IFS requires that the probability that the (single) alternative selected is approved by any individual voter is no less than $1/n$.
It is thus tempting to require $u_i(\vec{p}) = \sum_{c \in \approvalBallotOfAgent{i}} p_c \geq \frac{k}{n}$, which turns out to be too strong in our setting as a fractional committee satisfying it may not exist.\footnote{For instance, let~$\committeeSize > n$ and consider the case where voter $i$ only approves a single candidate.
Then, the above inequality cannot hold for $i$ as the left-hand side is upper bounded by $|A_i| = 1$ while the right-hand side is greater than one and can be arbitrarily large.}
Intuitively speaking, this is because our only restriction on the voters' approval sets is that each voter approves of at least one candidate, just as is standard in the single-winner literature.
However, whereas in the $\committeeSize = 1$ special case this assumption is sufficient to ensure that a uniform cut-off utility lower bound for each voter is feasible, the same is not true for general~$\committeeSize$.

\begin{definition}[IFS]
\label{def:IFS}
A fractional committee~$\vec{p}$ satisfies \emph{IFS} if for each~$i \in N$,
\[
u_i(\vec{p}) = \sum_{c \in \approvalBallotOfAgent{i}} p_c \geq \frac{1}{n} \cdot \min \{\committeeSize, |\approvalBallotOfAgent{i}|\}.
\]
\end{definition}

While IFS captures interpretation~\ref{enum:control-integral-outcome} of fair share, Strong IFS reflects interpretation~\ref{enum:control-fractional-outcome} which says that each voter should control $1/n$ of the fractional outcome.

\begin{definition}[Strong IFS]
\label{def:sIFS}
A fractional committee~$\vec{p}$ satisfies \emph{Strong IFS} if for each~$i \in N$,
\[
u_i(\vec{p}) = \sum_{c \in \approvalBallotOfAgent{i}} p_c \geq \min \left\{ \frac{\committeeSize}{n}, |\approvalBallotOfAgent{i}| \right\}.
\]
\end{definition}

Next, we strengthen IFS (resp., Strong IFS) to \emph{unanimous fair share (UFS)} (resp., \emph{Strong UFS}), which guarantees any group of like-minded voters an influence proportional to its size.

\begin{definition}[UFS]
\label{def:UFS}
A fractional committee~$\vec{p}$ is said to provide \emph{UFS} if for any~$S \subseteq N$ where $\approvalBallotOfAgent{i} = \approvalBallotOfAgent{j}$ for any~$i, j \in S$, then the following holds for each $i \in S$:
\[
u_i(\vec{p}) = \sum_{c \in \approvalBallotOfAgent{i}} p_c \geq \frac{|S|}{n} \cdot \min \{\committeeSize, |\approvalBallotOfAgent{i}|\}.
\]
\end{definition}

\begin{definition}[Strong UFS]
\label{def:sUFS}
A fractional committee~$\vec{p}$ is said to provide \emph{Strong UFS} if for any~$S \subseteq N$ where $\approvalBallotOfAgent{i} = \approvalBallotOfAgent{j}$ for any~$i, j \in S$, then the following holds for each $i\in S$:
\[
u_i(\vec{p}) = \sum_{c \in \approvalBallotOfAgent{i}} p_c \geq \min \left\{ |S| \cdot \frac{\committeeSize}{n}, |\approvalBallotOfAgent{i}| \right\}.
\]
\end{definition}

Our primary focus in this paper is a stronger notion---\emph{group fair share (GFS)}---which gives a non-trivial ex-ante representation guarantee to \emph{every} coalition of voters.

\begin{definition}[GFS]
\label{def:GFS}
A fractional committee~$\vec{p}$ is said to provide \emph{GFS} if the following holds for every~$S \subseteq N$:
\[
\sum_{c \in \bigcup_{i \in S} \approvalBallotOfAgent{i}} p_c \geq \frac{1}{n} \cdot \sum_{i \in S} \min\{\committeeSize, |\approvalBallotOfAgent{i}|\}.
\]
\end{definition}

We note that a GFS fractional committee always exists and can be computed by a very natural algorithm called \emph{Random Dictator}, which selects each voter's favourite integral committee (breaking ties arbitrarily) with probability~$1/n$.

\begin{proposition}
Random Dictator computes a randomized committee that is ex-ante GFS in polynomial time.
\end{proposition}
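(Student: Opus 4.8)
The plan is to write down explicitly the fractional committee that Random Dictator implements and then check the GFS inequality of \Cref{def:GFS} by a direct counting argument. First I would pin down what a voter's ``favourite integral committee'' is: for voter~$i$, any size-$\committeeSize$ committee~$W_i$ maximizing $u_i(W) = |A_i \cap W|$ is obtained by taking $\min\{\committeeSize, |A_i|\}$ of~$i$'s approved candidates and padding arbitrarily up to size~$\committeeSize$; in particular $|W_i \cap A_i| = \min\{\committeeSize, |A_i|\}$. Such a $W_i$ is computable in polynomial time, and since the lottery places mass $1/n$ on each of these (at most~$n$) committees, the whole randomized committee is produced in polynomial time, which settles the running-time claim.

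Next I would observe that Random Dictator implements the fractional committee $\vec{p} = \frac{1}{n}\sum_{j \in N} \vec{1}_{W_j}$, so that for any set of candidates $T \subseteq C$ we have the identity $\sum_{c \in T} p_c = \frac{1}{n}\sum_{j \in N} |W_j \cap T|$. Fixing an arbitrary coalition $S \subseteq N$ and taking $T = \bigcup_{i \in S} A_i$, and discarding the non-negative contributions of voters outside~$S$, gives $\sum_{c \in T} p_c \geq \frac{1}{n}\sum_{j \in S} |W_j \cap T|$.

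The only step requiring an idea is the last one: for each $j \in S$ we have $A_j \subseteq T$, so $|W_j \cap T| \geq |W_j \cap A_j| = \min\{\committeeSize, |A_j|\}$ by the choice of~$W_j$. Plugging this in yields $\sum_{c \in \bigcup_{i \in S} A_i} p_c \geq \frac{1}{n}\sum_{j \in S} \min\{\committeeSize, |A_j|\}$, which is precisely GFS; since $S$ was arbitrary we are done. I do not anticipate a real obstacle here --- the core is a one-line counting identity followed by a monotonicity bound --- and the only point that needs care is the ``$\min\{\committeeSize, \cdot\}$'' truncation, which is exactly what the dictator's favourite committee is designed to realize for the dictating voter.
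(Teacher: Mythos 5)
Your proposal is correct and follows essentially the same route as the paper's proof: express $\vec{p}$ as the average of the indicator vectors $\vec{1}_{W_j}$, drop the non-negative contributions of voters outside $S$, and bound $|W_j \cap \bigcup_{i\in S} A_i| \geq |W_j \cap A_j| = \min\{\committeeSize, |A_j|\}$ using the fact that $W_j$ is a favourite committee of voter $j$. Your write-up is in fact slightly more explicit than the paper's about why the dictator's favourite committee realizes the $\min\{\committeeSize, |A_j|\}$ truncation, but there is no substantive difference in approach.
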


\begin{proof}
First, it is clear that Random Dictator runs in polynomial time.
Let~$\{(\frac{1}{n}, W_i)\}_{i \in N}$ be the randomized committee returned by Random Dictator for an instance of our problem.
Let~$\vec{p}$ be the fractional committee it implements.
Note that $p_c = \sum_{i \in N} \frac{1}{n} \cdot \indicator{c \in W_i}$ for all~$c \in C$, where $\indicator{\cdot}$ is an indicator function.

Fix any~$S \subseteq N$.
Substituting to the LHS of the GFS guarantee, we get
\begin{align*}
\sum_{c \in \bigcup_{i \in S} A_i} p_c &= \sum_{c \in \bigcup_{i \in S} A_i} \left( \sum_{j \in N} \frac{1}{n} \cdot \indicator{c \in W_j} \right) \\
&\geq \frac{1}{n} \cdot \sum_{j \in S} \sum_{c \in \bigcup_{i \in S} A_i} \indicator{c \in W_j} \\
&= \frac{1}{n} \cdot \sum_{j \in S} \Big| W_j \cap \bigcup_{i \in S} A_i \Big| \geq \frac{1}{n} \cdot \sum_{j \in S} \min \{k, |A_i|\},
\end{align*}
where the last transition holds because $W_j$ is one of voter $j$'s most preferred committees by the definition of Random Dictator.
\end{proof}

However, Random Dictator does not satisfy Strong IFS.\footnote{To see this, consider an instance with $k = 2$, three candidates $\{c_1, c_2, c_3\}$, and two voters with $A_1 = \{c_1\}$ and $A_2 = \{c_2, c_3\}$.
Since each voter must select an integral committee, voter~$1$ allocates some of her probability to a candidate she does not approve, and thus $\sum_{c \in A_1} p_c = p_{c_1} = 1/2 < \min \left\{ \frac{\committeeSize}{n}, |\approvalBallotOfAgent{1}| \right\} = 1$.}
Indeed, this is the principal reason we chose to name the respective axiom hierarchies as we did.
There is a significant precedent of treating the Random Dictator mechanism as the utility lower bound for fair share axioms, including by the authors \citep[page~167]{BMS05a} who introduced fair share:
\begin{quote}
\itshape
Fair welfare share uses the random dictator mechanism as the disagreement option that each participant is entitled to enforce.
\end{quote}

Furthermore, the natural extensions of Strong UFS to Strong GFS are not guaranteed to exist.
For instance, following~\citep{BMS05a,BBPS21} and our own \Cref{def:GFS}, we may be tempted to formulate the RHS of Strong GFS as the sum of the Strong IFS guarantees, i.e., $\sum_{i\in S} \min \left\{\frac{\committeeSize}{n}, |\approvalBallotOfAgent{i}|\right\}$.
However, \Cref{ex:bad_gfs} will show that a fractional committee satisfying this notion may not always exist.\footnote{This demonstrates that the setting of \citet{BBPS21} deviates from our setting as \Cref{alg:GFS+EJR} outputs a decomposable outcome satisfying Strong IFS, but Strong GFS is not satisfied.}
Another natural generalization would be the following:
\begin{equation}
\label{eq:bad_gfs}
\sum_{c \in \bigcup_{i \in S} \approvalBallotOfAgent{i}} p_c \geq \min \left\{ |S| \cdot \frac{\committeeSize}{n}, \left| \bigcup_{i\in S} A_i \right| \right\}
\end{equation}

\Cref{eq:bad_gfs} captures the spirit of strong fair share well by affording each coalition of voters control over the outcome proportional to their size, upper bounded by the number of candidates they collectively approve.
\Cref{ex:bad_gfs} below shows the formulation of Strong GFS given by \Cref{eq:bad_gfs} is also impossible to satisfy.

\begin{example}
\label{ex:bad_gfs}
Consider an instance with $n = 4$, $k = 4$, and the following approval sets:
\[
A_1 = A_2 = \{c_1\} \qquad A_3 = \{c_1, c_2, c_3\} \qquad A_4 = \{c_1, c_4, c_5\}.
\]
For the group $T = \{1, 2, 3\}$, \Cref{eq:bad_gfs} requires that
\[
\sum_{c \in \bigcup_{i \in T} A_i} p_c \geq \min \left\{ |T| \cdot \frac{\committeeSize}{n}, \left| \bigcup_{i \in T} A_i \right| \right\} = 3.
\]
This means that each candidate in~$A_3$ must receive probability~$1$.
By symmetry, the same for the group $\{1, 2, 4\}$ and thus $A_4$.
However, since $|A_3 \cup A_4| = 5$ and $k = 4$, this is an impossibility.
\end{example}

It follows directly from the definitions that GFS implies UFS, which in turn implies IFS, and that each of our generalizations of IFS, UFS, and GFS correspond to their definitions in the single-winner voting scenarios.
The relations between these axioms are pictured in \Cref{fig:relations}.
We would also like to remark that any pair of ex-ante fairness notions not given explicit logical relation in \Cref{fig:relations} are logically independent (i.e., neither implies the other).
The details are deferred to \Cref{app:relations-ex-ante-fairness}.

\subsection{Relations between Fractional and Integral Fairness Concepts}

Before describing and proving our approach to best-of-both-worlds fairness in this setting, we first investigate the logical relations between our ex-ante and ex-post properties for integral committees.
In doing so, we rule out some naive approaches to our problem of interest and illustrate the usefulness of our ex-ante properties.
We begin by remarking that, as mentioned in \Cref{sec:intro}, there may not exist an integral committee satisfying any of our ex-ante fairness properties.

\begin{remark}
\label{rmk:integral-not-give-positive-share}
An integral committee satisfying positive share may not exist.
\end{remark}

As mentioned, this fact is the principal motivation for studying randomized committees.
However, we would also like to understand what our fairness concepts for fractional committees can tell us about the space of integral committees satisfying our ex-post fairness properties.
The following example and summarizing remark show that our fractional fairness concepts can help in reasoning about which outcomes satisfying our ex-post properties are more desirable.

\begin{example}
\label{ex:EJRnotIFS}
Consider an instance with $n = 10$ and $\committeeSize = 4$.
Suppose eight of the voters approve of candidates $\{c_1, c_2, c_3, c_4\}$ and the remaining two voters approve of candidate $\{c_5\}$.

Note that the committee $W = \{c_1, c_2, c_3, c_4\}$ satisfies EJR.
This is because $4 \cdot \frac{10}{4} > 8 \geq 3 \cdot \frac{10}{4}$ and~$W$ already includes at least three candidates approved by the eight voters.
Also, since $2 < 1 \cdot \frac{10}{4}$, \EJR does not guarantee the two voters who approve~$\{c_5\}$ being represented in~$W$, violating positive share.
The alternative committee of $\{c_1, c_2, c_3, c_5\}$ also satisfies EJR, and additionally satisfies IFS.
\end{example}

\begin{remark}
As shown by \Cref{ex:EJRnotIFS}, even when an integral committee satisfying IFS exists, some EJR outcomes may not satisfy positive share.
\end{remark}

From this, we conclude that a successful algorithm must select carefully from the space of outcomes satisfying our ex-post properties. We next explore to what extent our ex-ante properties imply our ex-post properties in the integral case.

\begin{proposition}
\label{prop:ifs_jr}
If an integral committee satisfies IFS, then it satisfies JR.
\end{proposition}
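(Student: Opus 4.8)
The plan is to unfold the definition of IFS for the special case where the fractional committee is integral, namely $\vec{p} = \vec{1}_W$, and observe that it forces an even stronger conclusion than JR: \emph{every} voter with a nonempty ballot is represented in $W$.

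First I would note that for an integral committee $W$ we have $u_i(\vec{1}_W) = \sum_{c \in \approvalBallotOfAgent{i}} (\vec{1}_W)_c = |\approvalBallotOfAgent{i} \cap W|$, which is a nonnegative integer. Next, since by assumption every ballot is nonempty, $|\approvalBallotOfAgent{i}| \geq 1$, and since $\committeeSize \geq 1$, we get $\min\{\committeeSize, |\approvalBallotOfAgent{i}|\} \geq 1$. Applying the IFS inequality then yields $|\approvalBallotOfAgent{i} \cap W| \geq \frac{1}{n} \cdot \min\{\committeeSize, |\approvalBallotOfAgent{i}|\} \geq \frac{1}{n} > 0$. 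Because the left-hand side is an integer, this strengthens to $|\approvalBallotOfAgent{i} \cap W| \geq 1$, i.e., $\approvalBallotOfAgent{i} \cap W \neq \emptyset$, for \emph{every} $i \in N$.

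Finally I would invoke the definition of JR directly: given any $1$-cohesive group $N' \subseteq N$ (in fact any nonempty subset of voters suffices here), pick any $i \in N'$; by the previous paragraph $\approvalBallotOfAgent{i} \cap W \neq \emptyset$, so the JR condition holds for $N'$. Hence $W$ satisfies JR.

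There is essentially no substantive obstacle in this argument; the only point that requires a moment's care is the rounding step, i.e., that a nonnegative \emph{integer} bounded below by the positive quantity $\frac{1}{n}\min\{\committeeSize,|\approvalBallotOfAgent{i}|\}$ must be at least $1$. I would make sure to state explicitly that this uses both $|\approvalBallotOfAgent{i}|\ge 1$ (the standing assumption that ballots are nonempty) and the integrality of $|\approvalBallotOfAgent{i}\cap W|$, since dropping either of these would break the implication.
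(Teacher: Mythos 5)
Your proof is correct and is essentially identical to the paper's: both unfold IFS at $\vec{p}=\vec{1}_W$, use the nonemptiness of ballots to get $u_i(\vec{1}_W)>0$, and invoke integrality of $|\approvalBallotOfAgent{i}\cap W|$ to conclude that every voter is represented, which trivially implies JR. Your extra remark about which hypotheses the rounding step relies on is a nice touch but does not change the argument.
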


\begin{proof}
Let $W$ be an integral committee which satisfies IFS and let $\vec{p}=\vec{1}_W$. Then, for all $i\in N$, we have
$$u_i(\vec{p}) = \sum_{c\in A_i} p_c = \vert A_i \cap W \vert \geq \frac{\min ( \vert A_i \vert, k )}{n} > 0.$$
Thus, since $|A_i\cap W|$ is an integer, $|A_i\cap W| \geq 1$ for all $i\in N$ and it follows that $W$ is JR.
\end{proof}

While \Cref{prop:ifs_jr} hints at a synergy between our ex-ante and ex-post properties, \Cref{prop:gfs_notpjr} below shows that even the strongest ex-ante property in our hierarchy does not imply the next strongest ex-post property.

\begin{proposition}
\label{prop:gfs_notpjr}
If an integral committee satisfies GFS, it does not necessarily satisfy PJR.
\end{proposition}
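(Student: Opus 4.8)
The plan is to produce an explicit instance together with an integral committee that satisfies GFS but violates PJR. Take $n = k = 4$ and five candidates $C = \{c_1, c_2, c_3, c_4, c_5\}$, with ballots $A_1 = A_2 = \{c_1, c_2\}$, $A_3 = \{c_3\}$ and $A_4 = \{c_4\}$. Consider the committee $W = \{c_1, c_3, c_4, c_5\}$ (of size $k = 4$), identified as usual with the fractional committee $\vec{1}_W$; here $c_5$ is approved by no one and serves only to fill the committee, and a fifth candidate is genuinely needed since otherwise $W$ would be forced to equal all of $C$.

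First I would record the PJR violation. The group $N' = \{1, 2\}$ is $2$-cohesive, since $|N'| = 2 = 2n/k$ and $|A_1 \cap A_2| = |\{c_1, c_2\}| = 2$. PJR would then require $\bigl|\,W \cap (A_1 \cup A_2)\,\bigr| = |W \cap \{c_1, c_2\}| \ge 2$, whereas $W \cap \{c_1, c_2\} = \{c_1\}$. Hence $W$ is not PJR. (Note that $W$ is still JR, consistently with \Cref{prop:ifs_jr}, since every single voter's ballot meets $W$.)

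The substantive part is checking that $W$ is GFS, i.e.\ that $\bigl|\,W \cap \bigcup_{i \in S} A_i\,\bigr| \ge \frac1n \sum_{i \in S} \min\{k, |A_i|\}$ for every $S \subseteq N$. The chosen parameters keep the right-hand side small: voters $1$ and $2$ each contribute $\min\{4, 2\} = 2$ and voters $3$ and $4$ each contribute $\min\{4, 1\} = 1$, so the sum is at most $6$ and the bound never exceeds $\tfrac64 = \tfrac32$. For every nonempty $S$ whose bound is at most $1$, it suffices to note that $W$ meets every ballot individually ($c_1 \in A_1 \cap A_2$, $c_3 \in A_3$, $c_4 \in A_4$), so $W \cap \bigcup_{i \in S} A_i \neq \emptyset$. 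The only sets whose bound exceeds $1$ are those containing $\{1,2\}$ together with at least one of $3, 4$, namely $\{1,2,3\}$, $\{1,2,4\}$ and $\{1,2,3,4\}$; for each of these, $\bigcup_{i \in S} A_i$ contains $c_1$ and at least one of $c_3, c_4$, all of which lie in $W$, so $\bigl|\,W \cap \bigcup_{i \in S} A_i\,\bigr| \ge 2 \ge \tfrac32$. This establishes GFS.

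There is no real analytical difficulty here; the only step needing care is the choice of $n$ and $k$. We want a two-voter $2$-cohesive group with two common candidates, which forces $k \ge n$, and at the same time we need $n$ large enough that the GFS requirement for that pair is met by placing just one of their common candidates in $W$ — this needs $\tfrac4n \le 1$, i.e.\ $n \ge 4$. With $n = 3$ the pair's GFS bound would be $\tfrac43 > 1$, forcing both $c_1$ and $c_2$ into $W$ and hence PJR; so $n = k = 4$ is essentially the smallest workable choice, after which every inequality above is immediate.
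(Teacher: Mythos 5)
Your proof is correct, and it takes essentially the same approach as the paper: an explicit counterexample with $k\ge n$ in which a $2$-cohesive group's commonly approved pair meets $W$ in only one candidate, while GFS's $\tfrac1n$-scaled bounds are all at most the number of approved candidates $W$ already covers. The paper's instance is even more economical ($n=2$, $k=4$, a single voter forming the $2$-cohesive group, so only three subsets to check), but the underlying mechanism is identical.
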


\begin{proof}
Consider an instance with $k = 4$ and $n = 2$ and the following approval profile:
\[
A_1 = \{c_1, c_2\} \qquad A_2 = \{c_2, c_3, c_4, c_5\}.
\]
The committee $W = \{c_2, c_3, c_4, c_5\}$ satisfies GFS since $|W \cap A_1| = 1 = \frac{1}{n} \cdot \min \{k, |A_1|\}$ (and voter~$2$ receives their most preferred committee).
Now see that $\{1\}$ is a 2-cohesive group; however, $|A_1 \cap W| < 2$, meaning that $W$ does not satisfy PJR.
\end{proof}

Despite this negative finding, in the following section, we will present a class of algorithms computing randomized committees which simultaneously satisfy ex-ante GFS and ex-post EJR.

\section{Best of Both Worlds: GFS + Strong UFS + EJR}
\label{sec:GFS+EJR}

In this section, we present a family of rules called \emph{\BWMESFull} (or \emph{\BWMES} for short), which obtains best-of-both-worlds fairness.
Our main result is:

\begin{theorem}
\label{thm:GFS+EJR}
\BWMES (\Cref{alg:GFS+EJR}) outputs a randomized committee that is ex-ante GFS, ex-ante Strong UFS, and ex-post EJR.
Furthermore, the algorithm can be implemented in polynomial time.
\end{theorem}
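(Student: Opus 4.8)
The plan is to single out a specific fractional committee $\vec{p}$ produced by the \MES-based construction underlying \BWMES, to verify directly that $\vec{p}$ is ex-ante GFS and ex-ante Strong UFS, to exhibit an implementation of $\vec{p}$ whose support consists entirely of \EJR committees, and finally to check that every step is polynomial. The ex-post guarantee is the easiest ingredient: each committee in the support will be the output of a run of \MES --- one run per voter (or per distinct ballot), with that voter's ballot given priority in tie-breaking and in the completion phase --- and since \MES returns an \EJR committee no matter how ties and completion are resolved (as recalled in \Cref{sec:prelim}), ex-post \EJR follows immediately. Where it is convenient to work with a genuinely fractional \MES outcome and round it, the rounding would be carried out with \AllocFromShares, arranged so that the support still consists of \MES committees.

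For GFS I would extend the argument used to show that Random Dictator is GFS. Writing $\vec{p}=\sum_j\lambda_j\vec{1}_{W_j}$, the construction must ensure that for every voter $i$ there is a $\tfrac1n$-weight ``anchor'' committee in the support containing $\min\{\committeeSize,|\approvalBallotOfAgent{i}|\}$ of $i$'s approved candidates; this is exactly what prioritizing $i$'s ballot buys (and, where \MES alone does not secure it, what reserving that many of $i$'s approved candidates for $i$ up front does). Granting this, fixing $S\subseteq N$, restricting the outer sum to the anchors of voters in $S$, and bounding $|W_j\cap\bigcup_{i\in S}\approvalBallotOfAgent{i}|\ge|W_j\cap\approvalBallotOfAgent{j}|=\min\{\committeeSize,|\approvalBallotOfAgent{j}|\}$ reproduces the chain of inequalities from the Random Dictator proof and yields GFS.

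\emph{The main obstacle is Strong UFS}, which is strictly stronger than what this averaging bound delivers: for a group $S$ all of whose members report the same ballot $A$, we must show $\sum_{c\in A}p_c\ge\min\{|S|\cdot\committeeSize/n,\,|A|\}$, whereas the Random-Dictator-style estimate only gives roughly a $|S|/n$ fraction of it. The structural fact I would exploit is that under \MES a voter never pays for an unapproved candidate, so in every run the whole group $S$ spends its pooled budget $|S|\cdot\committeeSize/n$ only on candidates of $A$; hence the mass of $A$-candidates purchased in that run is at least what $S$'s spent budget can afford, and the completion rule then spends any residual budget of $S$ on still-unselected candidates of $A$ until $A$ is saturated at $|A|$. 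Summing the per-run contributions has to reach $\min\{|S|\committeeSize,\,n|A|\}$; the delicate cases are runs in which the committee fills up during the purchasing phase before $S$'s budget is exhausted, and runs in which the $A$-candidates are ``expensive'' because $S$ is small and no voter outside $S$ approves them. Handling these is precisely what the design choices in \Cref{alg:GFS+EJR} are for --- in particular, reserving a fractional share of size $\min\{|S|\committeeSize/n,|A|\}$ on $A$ for each of the polynomially many ballot types, if direct \MES runs do not suffice. For the complexity claim, there are polynomially many \MES runs, each terminating in polynomially many polynomial-time rounds, and \AllocFromShares returns an explicit polynomial-support implementation in polynomial time, so the overall algorithm is polynomial.
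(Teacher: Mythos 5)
Your architecture is genuinely different from the paper's, and the differences are where the gaps lie. The paper runs \MES \emph{once} to get a single integral committee $W_\MES$, then has each voter spend their \emph{leftover} budget fractionally on approved candidates outside $W_\MES$ (falling back to arbitrary candidates only when $\approvalBallotOfAgent{i}\subseteq W_\MES$), and finally applies dependent rounding whose realizations all contain $W_\MES$; ex-post \EJR is inherited from $W_\MES$ alone. You instead propose a Random-Dictator-over-\MES-runs construction, one run per voter with that voter prioritized, and hang GFS on the claim that voter $i$'s run yields an ``anchor'' committee containing $\min\{\committeeSize,|\approvalBallotOfAgent{i}|\}$ of $i$'s approved candidates. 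That anchor claim is false: \MES's selections are governed by $\rho$-affordability across \emph{all} voters, and no amount of tie-breaking priority or completion lets a single voter with budget $\committeeSize/n$ force that many of her candidates in. Concretely, with $n=2$, $\committeeSize=2$, $\approvalBallotOfAgent{1}=\{c_1,c_2\}$, $\approvalBallotOfAgent{2}=\{c_3,c_4\}$, every run of \MES fills the committee with one candidate from each ballot, so no committee in your support contains $\min\{\committeeSize,|\approvalBallotOfAgent{1}|\}=2$ of voter $1$'s candidates, and the Random-Dictator-style chain of inequalities you invoke does not close. (GFS happens to hold in that instance, but not by your argument.) The correct per-voter fact, which your proposal never establishes, is a statement about \emph{spending}, not about anchors: every voter ends up paying at least $\frac{1}{n}\min\{\committeeSize,|\approvalBallotOfAgent{i}|\}$ in total for approved candidates --- either because $\approvalBallotOfAgent{i}\not\subseteq W_\MES$ and she spends her entire budget $\committeeSize/n$ on approved candidates, or because $\approvalBallotOfAgent{i}\subseteq W_\MES$ and she paid at least $1/|N_c|\ge 1/n$ for each $c\in\approvalBallotOfAgent{i}$ during \MES. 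Summing these payments over $S$ gives GFS directly.

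For Strong UFS you do identify the right mechanism --- a unanimous group pools budget $|S|\cdot\committeeSize/n$ and, since voters never pay for unapproved candidates, all of it lands on $A$ unless $A$ is already saturated --- and this is essentially the paper's argument. But you then defer the hard part (``reserving a fractional share of size $\min\{|S|\committeeSize/n,|A|\}$ on $A$ for each ballot type, if direct \MES runs do not suffice'') to unspecified design choices, which assumes exactly the simultaneous feasibility you are supposed to prove. In the paper this feasibility is automatic because the completion is a single fractional budget-spending step: total budget is $\committeeSize$, every unit spent adds one unit of probability mass, so the reservations for all unanimous groups coexist within a size-$\committeeSize$ fractional committee by construction. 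I would recommend discarding the multi-run/anchor scaffolding entirely and rebuilding the proof around the single \MES run, the per-voter spending lower bound, and the fact that the rounding scheme keeps the integrally-selected candidates in every realization.
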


\begin{algorithm}[t]
\caption{\BWMES: Ex-ante GFS, Ex-ante Strong UFS, and Ex-post EJR}
\label{alg:GFS+EJR}
\DontPrintSemicolon

\KwIn{Voters~$N = [n]$, candidates~$C = [m]$, approvals~$(\approvalBallotOfAgent{i})_{i \in N}$, and committee size~$\committeeSize$.}
\KwOut{A GFS and Strong UFS fractional committee~$\vec{p} = (p_c)_{c \in C}$ and its implementation as a lottery over EJR integral committees.}

\BlankLine
Initialize $\budgetMES_i \gets \committeeSize / n$ for each~$i \in N$, i.e., the budget voter~$i$ can spend on buying candidates.\;
Initialize $y_{ij} \gets 0$ for each~$i \in N$ and~$j \in C$, i.e., the amount voter~$i$ spends on candidate~$j$.\; \label{ALG:GFS+EJR:initialize-payments}

\BlankLine
\tcp{Obtain an integral EJR committee via MES.}
Let $W_\MES$ be an integral EJR committee returned by the first phase of MES~\citep[see, e.g.,][Rule~11]{LaSk23a} with initial budget $(\budgetMES_i)_{i \in N}$.\; \label{ALG:GFS+EJR:MES}
Update~$(\budgetMES_i)_{i \in N}$ to be the \emph{remaining} budgets of the voters after executing MES.\; \label{ALG:GFS+EJR:remaining-budget}
Update~$y_{ij}$ for each~$i \in N$ and~$j \in W_\MES$ to be the amount each voter~$i$ spent on candidate~$j$ during MES.\;
$\vec{p} = (p_1, p_2, \dots, p_m) \gets \vec{1}_{W_\MES}$ \tcp*{Initialize a fractional committee.} \label{ALG:GFS+EJR:initialize-fractional-committee}

\BlankLine
\tcp{Extend the integral EJR committee to a fractional committee that is GFS and Strong UFS.}
$N' \gets \{i \in N \mid \approvalBallotOfAgent{i} \setminus W_\MES \neq \emptyset\}$\;
\ForEach{$i \in N'$}{
	Voter~$i$ spends an arbitrary amount of~$y_{ic}$ on each~$c \in \approvalBallotOfAgent{i} \setminus W_\MES$ such that $\sum_{c \in \approvalBallotOfAgent{i} \setminus W_\MES} y_{ic} = \budgetMES_i$. \label{ALG:GFS+EJR:fund-approved-candidates}
}
\ForEach{$i \in N \setminus N'$}{
	Voter~$i$ spends~$\budgetMES_i$ in any fashion provided $p_c \leq 1$ for any~$c \in C$. Update~$y_{ic}$ accordingly. \label{ALG:GFS+EJR:fund-arbitrary-candidates}
}

\BlankLine
\tcp{Implementation.}
Apply a randomized rounding scheme~\citep[e.g.,][]{ALM+19a} to~$\vec{p}$, which outputs a lottery over integral committees of size~$k$. Let $\{(\lambda_j, W_j)\}_{j \in [s]}$ denote the randomized committee.\; \label{ALG:GFS+EJR:rounding}

\Return{$\vec{p}$ and its implementation~$\{(\lambda_j, W_j)\}_{j \in [s]}$}
\end{algorithm}

\subsection{A Family of Rules: \BWMES}

We start by providing an intuition behind our family of rules \BWMES, whose pseudocode can be found in \Cref{alg:GFS+EJR}.
At a high level, \BWMES follows in spirit the idea of the Method of Equal Shares (MES) of \citet{PeSk20a}.
To be more precise, we follow the MES algorithm description of \citet[Rule~11]{LaSk23a}, and make use of its first phase.
For ease of exposition, we simply refer to this first phase as ``MES''.

\begin{definition}[\MES]
Each voter is initially given a budget of~$\committeeSize/n$ to spend on candidates, each of which costs~$1$.
We start with an empty outcome~$W = \emptyset$ and sequentially add candidates to~$W$.
To add a candidate~$j$ to~$W$, we need the voters who approve~$j$ to pay for it.
Write~$y_{i j}$ for the amount that voter~$i$ pays for~$j$ and $b_i = \committeeSize/n - \sum_{j \in W} y_{i j} \geq 0$ for the amount of budget voter~$i$ has left.
For $\rho \geq 0$, we say that a candidate~$c \in C \setminus W$ is \emph{$\rho$-affordable} if
\[
\sum_{i \in N_c} \min(b_i, \rho) = 1.
\]
If no candidate is $\rho$-affordable for any~$\rho$, \MES terminates and return~$W$.
Otherwise, it adds to~$W$ a candidate~$j \in C \setminus W$ that is $\rho$-affordable for a minimum~$\rho$; payments are given by $y_{ij} = \min(b_i, \rho)$.
\end{definition}

\Cref{alg:GFS+EJR} returns an \emph{integral} EJR committee~$W_\MES$ in \cref{ALG:GFS+EJR:MES}.
Denote by $(b_i)_{i \in N}$ the \emph{remaining} budget of the voters after executing MES (\cref{ALG:GFS+EJR:remaining-budget}).
Our next step is to extend~$W_\MES$ to a \emph{fractional} GFS committee of size~$\committeeSize$ using voters' remaining budget.
We first initialize a fractional committee~$\vec{p}$ using~$W_\MES$ in \cref{ALG:GFS+EJR:initialize-fractional-committee}.
It is worth noting that for any~$c \in C \setminus W_\MES$, $\sum_{i \in N_c} \budgetMES_i < 1$; otherwise candidate~$c$ would have been included in~$W_\MES$ in \cref{ALG:GFS+EJR:MES}.
The key idea behind our completion method for the fractional committee in this family of rules is the following:
\begin{itemize}
\item We first let each~$i \in N$ such that $\approvalBallotOfAgent{i} \setminus W_\MES \neq \emptyset$ spend her remaining budget~$\budgetMES_i$ on candidates~$\approvalBallotOfAgent{i} \setminus W_\MES$, in an arbitrary way.

\item Next, for any other voter, her remaining budget can be spent on any candidate~$c \in C$ provided~$p_c \leq 1$.
\end{itemize}

Finally, for the implementation step, we can use any rounding method that implements the fractional committee~$\vec{p}$ by randomizing over integral committees of the same size; see, e.g., the \AllocFromShares method of \citet{ALM+19a}, the stochastic method of \citet{Grim04a}, or the dependent randomized rounding scheme of \citet{GKPS06}.

In the following, we use an illustrative example to demonstrate our algorithm.

\begin{example}\label{ex:BW-MES}
The following committee voting instance is used in Example~2.12 of \citet{LaSk23a} to illustrate MES.
Let~$\committeeSize = 3$.
Consider the following approval preferences which involve four candidates:
\begin{alignat*}{2}
&\approvalBallotOfAgent{1} = \approvalBallotOfAgent{2} = \approvalBallotOfAgent{3} = \{c_3, c_4\} \qquad &&\approvalBallotOfAgent{4} = \approvalBallotOfAgent{5} = \{c_1, c_2\} \\
&\approvalBallotOfAgent{6} = \approvalBallotOfAgent{7} = \{c_1, c_3\} \qquad &&\approvalBallotOfAgent{8} = \{c_2, c_4\}.
\end{alignat*}

The voters start with a budget of~$3/8$.
\Cref{ALG:GFS+EJR:MES} of \Cref{alg:GFS+EJR} returns candidates~$\{c_1, c_3\}$ (alternatively, $\vec{p} = (1, 0, 1, 0)$).
We give more details below.
In the first round, since every voter has the same amount of initial budget and candidate~$c_3$ has the most approving voters, \MES selects~$c_3$ and has each approving voter~$\{1, 2, 3, 6, 7\}$ pay~$1/5$.
Agents' budgets are updated as follows:
\begin{alignat*}{2}
&\budgetMES_1 = \budgetMES_2 = \budgetMES_3 = 7/40 \qquad && \budgetMES_4 = \budgetMES_5 = 3/8 \\
&\budgetMES_6 = \budgetMES_7 = 7/40 \qquad &&\budgetMES_8 = 3/8.
\end{alignat*}
In the next round, we calculate the $\rho$-affordability for each of the remaining candidates:
\begin{itemize}
\item $c_1$ is $\frac{13}{40}$-affordable because $\sum_{i \in \{4, 5, 6, 7\}} \min(b_i, 13/40) = 1$;
\item $c_2$ is $\frac{1}{3}$-affordable because $\sum_{i \in \{4, 5, 8\}} \min(b_i, 1/3) = 1$;
\item $c_4$ is not $\rho$-affordable for any~$\rho$ due to $\sum_{i \in \{1, 2, 3, 8\}} b_i = 36/40 < 1$.
\end{itemize}
\MES selects candidate~$c_1$, and then terminates as no remaining candidate is affordable.

The remaining budgets of the voters after \MES returns~$\vec{p} = (1, 0, 1, 0)$ are as follows:
\begin{alignat*}{2}
&\budgetMES_1 = \budgetMES_2 = \budgetMES_3 = 7/40 \qquad && \budgetMES_4 = \budgetMES_5 = 1/20 \\
&\budgetMES_6 = \budgetMES_7 = 0 \qquad &&\budgetMES_8 = 3/8.
\end{alignat*}
Then, in \cref{ALG:GFS+EJR:fund-approved-candidates} of \Cref{alg:GFS+EJR}, each voter~$i \in [8]$ spends~$\budgetMES_i$ on candidates~$\{c_2, c_4\} \cap \approvalBallotOfAgent{i}$ lexicographically.
Therefore, we obtain the fractional committee $\vec{p} = (1, 19/40, 1, 21/40)$, which can be implemented by the randomized committee $\left\{ \left( \frac{19}{40}, \{c_1, c_2, c_3\} \right), \left( \frac{21}{40}, \{c_1, c_3, c_4\} \right) \right\}$, e.g., using the \AllocFromShares method of \citet{ALM+19a}.

In the following, we explain in more detail how \AllocFromShares outputs the lottery.
\AllocFromShares takes as input the vector $\vec{p} = (1, 19/40, 1, 21/40)$, and first reorders the candidates as $c_1, c_3, c_2, c_4$ (in ascending order of $p_j - \floor{p_j}$), yielding $\vec{s} = (1, 1, 19/40, 21/40)$.
Although our target committee size is~$3$, the selection probability we need to allocate via rounding is~$\alpha = 1$, computed in line~5 of \AllocFromShares (by summing $s_j - \floor{s_j}$ over~$j$).
To be consistent with our own notation as well, we initialize $(\lambda_1, \lambda_2, \lambda_3, \lambda_4) \gets \vec{0}$, where~$\lambda_i$ is the probability associated with the resulting rounding allocation when considering~$s_i$.

We start with $\texttt{low} = 1$ and $\texttt{high} = 4$.
The total probability~$\bar{p}$ allocated so far is~$0$.
We begin by considering the following allocation:
\[
\vec{t}_1 = (\ceiling{1}, \floor{1}, \floor{19/40}, \floor{21/40}) = (1, 1, 0, 0).
\]
Since $0 = s_1 - \floor{s_1} - \lambda_1 < \ceiling{s_4} - s_4 - \bar{p} + \lambda_4 = 19/40$, this allocation will get a probability of~$\lambda_1 = 0$.
Now, $\texttt{low} = 2$ and everything else remains the same.
We look at the next allocation:
\[
\vec{t}_2 = (\floor{1}, \ceiling{1}, \floor{19/40}, \floor{21/40}) = (1, 1, 0, 0).
\]
Since $0 = s_2 - \floor{s_2} - \lambda_2 < \ceiling{s_4} - s_4 - \bar{p} + \lambda_4 = 19/40$, this allocation will get a probability of~$\lambda_2 = 0$.
Now, $\texttt{low} = 3$ and everything else remains the same.
We move to the following allocation:
\[
\vec{t}_3 = (\floor{1}, \floor{1}, \ceiling{19/40}, \floor{21/40}) = (1, 1, 1, 0).
\]
Since $19/40 = s_3 - \floor{s_3} - \lambda_3 \geq \ceiling{s_4} - s_4 - \bar{p} + \lambda_4 = 19/40$, this allocation (i.e., $\{c_1, c_3, c_2\}$) will get a probability of~$\lambda_3 = 19/40$.
Now, $\texttt{high} = 3$, $\alpha = 0$, and $\bar{p} = 19/40$.
Finally, we look at:
\[
\vec{t}_4 = (\floor{1}, \floor{1}, \floor{19/40}, \ceiling{21/40}) = (1, 1, 0, 1).
\]
Since $\alpha = 0$, this allocation (i.e., $\{c_1, c_3, c_4\}$) will get a probability of~$\lambda_4 = 21/40$.
\end{example}

\subsection{Analysis of \BWMES}

Before proving \Cref{thm:GFS+EJR}, we first show a lower bound on voters' utilities provided by \BWMES.

\begin{claim}
\label{claim:LB-spending}
In \Cref{alg:GFS+EJR}, for each~$i \in N$, it holds that
\[
\sum_{j \in \approvalBallotOfAgent{i}} y_{ij} \geq \frac{1}{n} \cdot \min \{\committeeSize, |\approvalBallotOfAgent{i}|\}.
\]
\end{claim}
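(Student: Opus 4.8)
The plan is to case on whether voter~$i$ lies in $N'=\{i\in N:\approvalBallotOfAgent{i}\setminus W_\MES\neq\emptyset\}$ and, in each case, to account separately for what voter~$i$ pays during the MES phase (\cref{ALG:GFS+EJR:MES} of \Cref{alg:GFS+EJR}) and during the budget-completion phase (\cref{ALG:GFS+EJR:fund-approved-candidates,ALG:GFS+EJR:fund-arbitrary-candidates}). The only structural facts I will use are: (i) during MES a voter only ever pays for candidates she approves, so every unit voter~$i$ spends in \cref{ALG:GFS+EJR:MES} goes to a candidate in $\approvalBallotOfAgent{i}\cap W_\MES$; (ii) by definition of the remaining budget, voter~$i$'s total MES spending equals $\committeeSize/n-\budgetMES_i$; and (iii) whenever MES buys a candidate~$j$ with parameter~$\rho_j$, we have $1=\sum_{i\in N_j}\min(\budgetMES_i,\rho_j)\le|N_j|\,\rho_j\le n\rho_j$, so $\rho_j\ge 1/n$ (here $\budgetMES_i$ denotes the budget current at the time $j$ is bought).

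\textbf{Case $i\in N'$.} Here \cref{ALG:GFS+EJR:fund-approved-candidates} makes voter~$i$ spend her entire leftover budget $\budgetMES_i$ on candidates in $\approvalBallotOfAgent{i}\setminus W_\MES$. Combining this with her MES spending of $\committeeSize/n-\budgetMES_i$, which by~(i) is entirely on approved candidates, her total spending on approved candidates is exactly $\committeeSize/n$. Since $\min\{\committeeSize,|\approvalBallotOfAgent{i}|\}\le\committeeSize$, this already gives $\sum_{j\in\approvalBallotOfAgent{i}}y_{ij}=\committeeSize/n\ge\tfrac1n\min\{\committeeSize,|\approvalBallotOfAgent{i}|\}$; this case is pure bookkeeping.

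\textbf{Case $i\in N\setminus N'$.} Now $\approvalBallotOfAgent{i}\subseteq W_\MES$, so $|\approvalBallotOfAgent{i}|\le|W_\MES|\le\committeeSize$ (at most $\committeeSize$ candidates are bought, each at unit cost) and hence $\min\{\committeeSize,|\approvalBallotOfAgent{i}|\}=|\approvalBallotOfAgent{i}|$. The delicate point is that \cref{ALG:GFS+EJR:fund-arbitrary-candidates} may route voter~$i$'s leftover budget onto candidates she does not approve, so the bound must be extracted from the MES phase alone; but by~(i) that phase spends all of voter~$i$'s used budget on $\approvalBallotOfAgent{i}$, so $\sum_{j\in\approvalBallotOfAgent{i}}y_{ij}\ge\committeeSize/n-\budgetMES_i$ and it suffices to show $\budgetMES_i\le(\committeeSize-|\approvalBallotOfAgent{i}|)/n$. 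I split on whether $\budgetMES_i=0$: if so we are done since $|\approvalBallotOfAgent{i}|\le\committeeSize$; if $\budgetMES_i>0$, voter~$i$'s budget stays strictly positive throughout MES, so when each $j\in\approvalBallotOfAgent{i}$ was bought she paid the full $\rho_j$ rather than being capped, whence her total MES spending equals $\sum_{j\in\approvalBallotOfAgent{i}}\rho_j\ge|\approvalBallotOfAgent{i}|/n$ by~(iii), i.e.\ $\budgetMES_i=\committeeSize/n-\sum_{j\in\approvalBallotOfAgent{i}}\rho_j\le(\committeeSize-|\approvalBallotOfAgent{i}|)/n$.

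The main obstacle is precisely this last case: one has to notice that the completion step is worthless for voters in $N\setminus N'$ and squeeze the entire bound out of the MES phase, then correctly handle the ``ran out of budget'' versus ``always paid the full price $\rho_j$'' dichotomy, the two branches relying on $|\approvalBallotOfAgent{i}|\le\committeeSize$ and on the price lower bound $\rho_j\ge1/n$ respectively. Everything else follows immediately from the definitions of MES and of \Cref{alg:GFS+EJR}.
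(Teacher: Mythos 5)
Your proof is correct and takes essentially the same route as the paper's: the same case split on whether $\approvalBallotOfAgent{i} \subseteq W_\MES$, with the first case being pure budget bookkeeping and the second resolved by the same dichotomy between a voter who exhausts her budget (so all of her $\committeeSize/n$ goes to approved candidates, and $|\approvalBallotOfAgent{i}|\le\committeeSize$ finishes it) and one who pays the full price $\rho_j\ge 1/|N_j|\ge 1/n$ for each of her approved candidates. Your justification that a voter with positive leftover budget was never capped, and the explicit derivation of $\rho_j\ge 1/n$ from the affordability equation, merely spell out steps the paper leaves implicit.
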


\begin{proof}
Recall that each~$i \in N$ is given an initial budget of~$\committeeSize / n$.
Fix any~$i \in N$ such that $\approvalBallotOfAgent{i} \setminus W_\MES \neq \emptyset$.
By the construction of \Cref{alg:GFS+EJR}, voter~$i$ spends her budget~$\committeeSize / n$ on candidates that she approves, in \cref{ALG:GFS+EJR:MES} when executing MES or in \cref{ALG:GFS+EJR:fund-approved-candidates} when completing the fractional committee~$\vec{p}$.
It thus follows that $\sum_{j \in \approvalBallotOfAgent{i}} y_{ij} = \frac{\committeeSize}{n} \geq \frac{1}{n} \cdot \min \{\committeeSize, |\approvalBallotOfAgent{i}|\}$.

Now, fix any~$i \in N$ such that $\approvalBallotOfAgent{i} \setminus W_\MES = \emptyset$ (alternatively, $\approvalBallotOfAgent{i} \subseteq W_\MES$).
In other words, all candidates approved by voter~$i$ are already \emph{fully} included in the fractional committee~$\vec{p}$.
Clearly, $|\approvalBallotOfAgent{i}| \leq \committeeSize$.
Fix any~$c \in \approvalBallotOfAgent{i}$.
Recall that $N_c$ consists of voters who approve candidate~$c$.
If voter~$i$ spends the remainder of their budget on candidate~$c$ for any $c \in A_i$, then the claim holds trivially.
Otherwise, by the construction of MES, voter~$i$ pays an amount of at least~$1 / |N_c|$ for candidate~$c$, meaning that $y_{ic} \geq 1 / |N_c| \geq 1/n$.
We thus have $\sum_{j \in \approvalBallotOfAgent{i}} y_{ij} \geq \frac{1}{n} \cdot |\approvalBallotOfAgent{i}| = \frac{1}{n} \cdot \min \{\committeeSize, |\approvalBallotOfAgent{i}|\}$, as desired.
\end{proof}

We are now ready to establish our main result.

\begin{proof}[Proof of \Cref{thm:GFS+EJR}]
We break the proof into the following five parts.

\paragraph{Feasibility}
For each~$c \in C$, note that whenever a (positive) amount~$p_c$ of the candidate is added to the fractional committee~$\vec{p}$, the voters together pay a total of~$p_c$.
Since the voters have a total starting budget of~$\committeeSize$ and each spends their entire budget, \Cref{alg:GFS+EJR} returns a fractional committee of size~$\committeeSize$.
Next, due to the randomized rounding scheme we use in \cref{ALG:GFS+EJR:rounding}, each integral committee in the returned randomized committee is of size~$\committeeSize$.
In short, the fractional committee~$\vec{p}$ and each integral committee in the randomized committee returned by \Cref{alg:GFS+EJR} respect the size constraint.

\paragraph{Ex-ante GFS}
Recall that~$y_{ij}$ denotes the amount each voter~$i \in N$ spent on each candidate~$j \in C$ in \Cref{alg:GFS+EJR}.
The fractional committee~$\vec{p} = (p_1, p_2, \dots, p_m)$ can thus be alternatively expressed by $p_j = \sum_{i \in N} y_{ij}$ for each~$j \in C$.

Given this, for any~$S \subseteq N$, we now have
\[
\sum_{j \in \bigcup_{v \in S} \approvalBallotOfAgent{v}} p_j
= \sum_{j \in \bigcup_{v \in S} \approvalBallotOfAgent{v}} \sum_{i \in N} y_{ij}
\geq \sum_{i \in S} \sum_{j \in \bigcup_{v \in S} \approvalBallotOfAgent{v}} y_{ij}
\geq \sum_{i \in S} \sum_{j \in \approvalBallotOfAgent{i}} y_{ij}
\geq \sum_{i \in S} \frac{1}{n} \cdot \min \{\committeeSize, |\approvalBallotOfAgent{i}|\},
\]
where the last transition is due to \Cref{claim:LB-spending}.

\paragraph{Ex-ante Strong UFS}
Consider any $S \subseteq N$ such that $A_i = A$ for all~$i \in S$, where $A \subseteq C$.
First, if $A \subseteq W_\MES$, Strong UFS follows immediately.
In the following, we consider the case where $A \setminus W_\MES \neq \emptyset$.
In this case, each voter in~$S$ spends the entirety of their budget in \cref{ALG:GFS+EJR:fund-approved-candidates}.
Thus, since voters only pay for candidates they approve, for each $i \in S$, we have
\[
u_i(\vec{p}) \geq \sum_{j \in A} \sum_{i \in S} y_{ij} = |S| \cdot \frac{k}{n}.
\]

\paragraph{Ex-post EJR}
According to the first property of \citet[Theorem~2.3]{GKPS06}, $W_\MES$ is included in every realization.
Since \MES satisfies EJR~\citep{PeSk20a}, we have that the lottery output is ex-post EJR.

\paragraph{Polynomial-time Computation}
To begin, note that both MES of \citet{PeSk20a} and the randomized rounding scheme of \citet{GKPS06} used in \cref{ALG:GFS+EJR:MES,ALG:GFS+EJR:rounding} run in polynomial time.
Thus, the computational complexity of any rule in the \BWMES family is dominated by how it completes the fractional committee in \crefrange{ALG:GFS+EJR:fund-approved-candidates}{ALG:GFS+EJR:fund-arbitrary-candidates}, which can be done in polynomial time as follows.
Iterate through candidates~$C \setminus W_\MES$ in an arbitrary order, allocating to each candidate the remaining budgets of those agents who approve the candidate, i.e., for $c \in C\setminus W_\MES$, $p_c \gets \sum_{i \in N_c} b_i$, and zero the agents' budgets accordingly.
\end{proof}

Recently, \citet{BrPe23} proposed a new notion called \emph{\EJRplus}, which strengthens \EJR, has guaranteed existence, and moreover, can be verified in polynomial time.

\begin{definition}[\EJRplus]
A committee~$W$ is said to satisfy~\emph{\EJRplus} if there is no candidate~$c \in C \setminus W$, group of voters~$N' \subseteq N$, and~$\ell \in \mathbb{N}$ with $|N'| \geq \ell \cdot \frac{n}{\committeeSize}$ such that
\[
c \in \bigcap_{i \in N'} \approvalBallotOfAgent{i} \quad \text{and} \quad |\approvalBallotOfAgent{i} \cap W| < \ell \text{ for all } i \in N'.
\]
\end{definition}

\citet{BrPe23} showed that \MES satisfies \EJRplus.
Together with a similar argument we used in the proof of \Cref{thm:GFS+EJR}, it can be seen that the randomized committee returned by \Cref{alg:GFS+EJR} also satisfies the stronger ex-post \EJRplus property.
We thus can strengthen \Cref{thm:GFS+EJR} as follows:

\begin{proposition}
\BWMES (\Cref{alg:GFS+EJR}) outputs a randomized committee that is ex-ante GFS, ex-ante Strong UFS, and ex-post \EJRplus.
Furthermore, the algorithm can be implemented in polynomial time.
\end{proposition}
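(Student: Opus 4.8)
The plan is to observe that the proof of \Cref{thm:GFS+EJR} already establishes three of the four assertions verbatim, so the only genuinely new work is upgrading the ex-post guarantee from \EJR to \EJRplus. The arguments for ex-ante GFS, ex-ante Strong UFS, feasibility, and polynomial-time computation given in the proof of \Cref{thm:GFS+EJR} make no reference to which ex-post property \MES satisfies; they rely only on \Cref{claim:LB-spending}, the budget accounting of \Cref{alg:GFS+EJR}, and the facts that the rounding scheme of \citet{GKPS06} runs in polynomial time and preserves committee size. Hence I would simply cite those parts of the proof unchanged and devote the remaining argument to ex-post \EJRplus.

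For ex-post \EJRplus I would proceed in two steps. First, recall the first property of \citet[Theorem~2.3]{GKPS06}, already invoked in the proof of \Cref{thm:GFS+EJR}: since $p_c = 1$ for every $c \in W_\MES$ (as $\vec{p}$ is initialized to $\vec{1}_{W_\MES}$ in \cref{ALG:GFS+EJR:initialize-fractional-committee} and coordinates are never decreased), any implementation of $\vec{p}$ must select each such $c$ with probability $1$, so $W_\MES \subseteq W_j$ for every committee $W_j$ in the support of the output lottery. Second, invoke the result of \citet{BrPe23} that \MES satisfies \EJRplus, so $W_\MES$ itself is an \EJRplus committee; note $|W_\MES| \le \committeeSize$ while $|W_j| = \committeeSize$, so $W_\MES \subseteq W_j$ is indeed a proper-size extension.

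The remaining ingredient, and the only new observation needed, is that \EJRplus is monotone under adding candidates: if $W$ satisfies \EJRplus and $W \subseteq W'$, then $W'$ satisfies \EJRplus. I would prove this by contraposition. If $W'$ violates \EJRplus via a candidate $c \in C \setminus W'$, a group $N' \subseteq N$, and $\ell \in \mathbb{N}$ with $|N'| \ge \ell \cdot n/\committeeSize$, $c \in \bigcap_{i \in N'} \approvalBallotOfAgent{i}$, and $|\approvalBallotOfAgent{i} \cap W'| < \ell$ for all $i \in N'$, then since $W \subseteq W'$ we have $c \in C \setminus W$ and $|\approvalBallotOfAgent{i} \cap W| \le |\approvalBallotOfAgent{i} \cap W'| < \ell$ for every $i \in N'$, so the same triple witnesses a violation of \EJRplus for $W$. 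Applying this with $W = W_\MES$ and $W' = W_j$ for each $j$, every committee in the support of the lottery satisfies \EJRplus, which is precisely ex-post \EJRplus. (This parallels exactly how ex-post \EJR was derived in the proof of \Cref{thm:GFS+EJR}, which implicitly used the analogous monotonicity of \EJR.)

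I do not anticipate a real obstacle: the monotonicity lemma is a one-line contrapositive, and everything else is a pointer to \Cref{thm:GFS+EJR} and to \citet{BrPe23} and \citet{GKPS06}. The only point that warrants a moment's care is confirming that the ``first property'' of the \citet{GKPS06} rounding scheme forces every integrally-selected coordinate into every committee in the support — but this is exactly the use already made of it in the proof of \Cref{thm:GFS+EJR}.
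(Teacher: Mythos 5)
Your proposal is correct and matches the paper's approach: the paper proves this proposition by simply noting that the argument for Theorem~\ref{thm:GFS+EJR} goes through verbatim once one replaces the fact that \MES satisfies \EJR with the result of \citet{BrPe23} that it satisfies \EJRplus. You are in fact slightly more careful than the paper, since you explicitly state and prove the superset-monotonicity of \EJRplus (which the paper, like its \EJR argument, leaves implicit), and that contrapositive argument is valid.
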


\subsection{Completing MES}

Because MES may return an EJR committee~$W_\MES$ of size \emph{less than}~$\committeeSize$, several ways of extending $W_\MES$ to an \emph{integral} committee of size exactly~$\committeeSize$ have been discussed~\citep[see, e.g.,][]{PeSk20a,LaSk23a}.
As we have seen previously (\Cref{rmk:integral-not-give-positive-share}), an EJR committee may not provide positive share, let alone GFS.
We provide a novel perspective on the completion of MES.
Specifically, we define a family of rules which extends an integral committee returned by MES to a \emph{randomized} committee providing GFS.

Due to the flexibility in the definition of the \BWMES family, the number of \BWMES rules obtaining distinct outcomes can be quite large.
For example, one \BWMES rule that seems quite natural is that which continues in the spirit of MES: for the candidate whose supporters have the most collective budget leftover, this budget is spent on the candidate, and we continue in this fashion sequentially.
It is an interesting future direction to further identify specific algorithms in the \BWMES family which provide additional desiderata such as high social welfare or additional ex-ante properties.

\section{Best of Both Worlds: Strong UFS + FJR}
\label{sec:BWGCR}

In this section, we present an algorithm which satisfies an alternative strengthening of EJR, known as \emph{fully justified representation (FJR)}, although our result comes at the cost of ex-ante GFS and polynomial time computation.

We begin by defining an ex-post fairness guarantee stronger than \EJR, first introduced by \citet{PPS21a} in the context of \emph{participatory budgeting}.
For our purpose, we state the axiom and \citeauthor{PPS21a}'s results pertaining to FJR in the context of approval-based committee voting.

\begin{definition}[FJR~\citep{PPS21a,LaSk23a}]
Given a positive integer $\beta$ and a set of candidates~$T \subseteq C$, a group of voters~$N' \subseteq N$ is said to be \emph{weakly $(\beta, T)$-cohesive} if $|N'| \geq |T| \cdot \frac{n}{\committeeSize}$ and $|A_i \cap T| \geq \beta$ for every voter~$i \in N'$.

A committee~$W$ is said to satisfy \emph{fully justified representation (FJR)} if for every weakly $(\beta, T)$-cohesive group of voters~$N'$, it holds that $|A_i\cap W| \geq \beta$ for some $i\in N'$.
\end{definition}

Note that FJR is incomparable to \EJRplus~\citep{BrPe23}.
\citet{PPS21a} gave an algorithm called \emph{Greedy Cohesive Rule (GCR)} which computes an FJR committee.

\begin{definition}[\GCR]
The Greedy Cohesive Rule (GCR) begins by marking all voters as \emph{active} and initializing $W = \emptyset$.
In each step, GCR searches for a set of voters~$N' \subseteq N$ who are all active and a set of candidates~$T \subseteq C \setminus W$ such that $N'$ is weakly $(\beta, T)$-cohesive, choosing the sets which maximize $\beta$, breaking ties in favour of smaller~$|T|$, followed by larger~$|N'|$.
GCR then adds the candidates from~$T$ to~$W$ and labels all of the voters in~$N'$ as inactive.
If, at any step, no weakly $(\beta, T)$-cohesive group exists for any positive integer~$\beta$, then the algorithm returns~$W$ and terminates.
\end{definition}

\subsection{The Algorithm: \BWGCR}

\begin{algorithm}[t]
\caption{\BWGCR: Ex-ante Strong UFS and ex-post FJR}
\label{alg:sUFS+FJR}
\DontPrintSemicolon

\KwIn{Voters~$N = [n]$, candidates~$C = [m]$, approvals~$(\approvalBallotOfAgent{i})_{i \in N}$, and committee size~$\committeeSize$.}
\KwOut{A Strong UFS fractional committee~$\vec{p} = (p_c)_{c \in C}$ and its implementation as a lottery over FJR integral committees.}

\BlankLine
$W_\GCR \gets \operatorname{GreedyCohesiveRule}(N, C, (\approvalBallotOfAgent{i})_{i \in N}, k)$\; \label{alg:sUFS+FJR:gcr}
Denote by~$r$ the number of steps GCR executes.\;
\ForEach{$j \in [r]$}{
	Denote by~$N_j$ the weakly cohesive group selected in the $j$-th step of GCR and~$\beta_j$ the corresponding~$\beta$ value used in this step.\;
	Denote by $\{N_j^1, \ldots, N_j^{\eta_j}\}$ the~$\eta_j$ many distinct unanimous groups within~$N_j$.
}
Denote by~$N_\GCR$ the set of inactive voters.\;
$N_\RD \gets N \setminus N_\GCR$\;
$k_\RD \gets \committeeSize - |W_\GCR|$\;

\BlankLine
\ForEach{$j \gets 1$ \KwTo $r$}{
	\ForEach{$z \gets 1$ \KwTo $\eta_j$}{
		$b_i \gets \max \left\{ 0, \frac{k}{n} - \frac{\beta_j}{\vert N_j^z \vert} \right\}$ for all $i \in N_j^z$ \label{alg:sUFS+FJR:b-jz}
	}
}

\ForEach{$i \in N_\RD$}{
	$b_i \gets \frac{1}{|N_\RD|} \cdot \left( k_\RD - \sum_{l \in N_\GCR} b_l \right)$ \label{alg:sUFS+FJR:b-rd}
}

\BlankLine
Execute \Cref{alg:GFS+EJR} from \cref{ALG:GFS+EJR:initialize-payments} with budgets $(b_i)_{i\in N}$, candidates $C\setminus W_\GCR$, and committee size $k_\RD$ to get a randomized outcome $\{(\lambda_j, W_j)\}_{j \in [s]}$. \;\label{alg:sUFS+FJR:call-bwmes}
\Return{$(\boldsymbol{\lambda}, \mathbf{W}) = \{(\lambda_j, W_j\cup W_\GCR)\}_{j \in [s]}$ and the fractional outcome it implements, $\vec{p}$.}
\end{algorithm}

We give an algorithm, referred to as \BWGCR, which obtains ex-ante Strong UFS in addition to ex-post FJR.
The pseudocode of the algorithm is presented as \Cref{alg:sUFS+FJR}.
At a high level, our algorithm is an interesting synthesis of the Greedy Cohesive Rule and our \BWMES Rule (\Cref{alg:GFS+EJR}).
A key ingredient here is how we set voters' budgets for the \BWMES phase in order to obtain a feasible Strong UFS fractional committee in the end.

More specifically, our algorithm begins by calling \GCR, denoting by~$r$ the number of steps it executes before terminating.
For each~$j \in [r]$, we refer to~$N_j$, $T_j$, and~$\beta_j$ as the values of~$N'$, $T$, and~$\beta$ for the weakly cohesive group selected in the $j$-th step of \GCR.
Let~$W_\GCR = \bigcup_{j \in [r]} T_j$ be the returned integral FJR committee, $N_\GCR = \bigcup_{j \in [r]} N_j$ the set of voters who are inactivated (put differently, represented), and $N_\RD = N \setminus N_\GCR$ the set of active voters.
For ease of exposition, we sometimes discuss a specific unanimous group within a given weakly cohesive group.
Given any set of voters~$S$, we can partition the voters into a collection of \emph{(maximal) unanimous groups}~$\{S^1, S^2, \dots, S^\eta\}$ such that for each~$z \in [\eta]$, voters~$S^z$ have an identical preference (i.e., they are unanimous) and for any~$i \in S^z$ and $i' \in S^{z'}$ with $z \neq z'$, $\approvalBallotOfAgent{i} \neq \approvalBallotOfAgent{i'}$ (i.e., each unanimous group is maximal).
We will denote by~$N_j^z$ the $z$-th unanimous group of the $j$-th weakly cohesive group~$N_j$ encountered by \Cref{alg:sUFS+FJR} and will refer to the number of distinct unanimous groups in~$N_j$ as~$\eta_j$.

Next, \Cref{alg:sUFS+FJR} proceeds by carefully setting an individual budget~$b_i$ for each voter~$i \in N_\GCR$ in \cref{alg:sUFS+FJR:b-jz}, and next for voters in~$N_\RD$ in \cref{alg:sUFS+FJR:b-rd} in a different way.
We then call \BWMES (\Cref{alg:GFS+EJR}) as a subroutine with budgets~$(b_i)_{i \in N}$, candidates~$C \setminus W_\GCR$ and committee size~$\committeeSize_\RD = \committeeSize - |W_\GCR|$, and get a randomized outcome~$\{(\lambda_j, W_j)\}_{j \in [s]}$.
We will show shortly that the randomized committee~$\{(\lambda_j, W_j \cup W_\GCR)\}_{j \in [s]}$ is feasible, ex-post FJR, and ex-ante Strong UFS.

\subsection{Analysis of \BWGCR}

We now state the main result of this section, which shows that \Cref{alg:sUFS+FJR} provides best-of-both-worlds fairness.

\begin{theorem} \label{thm:sUFS+FJR}
\BWGCR (\Cref{alg:sUFS+FJR}) computes a randomized committee that is ex-ante Strong UFS and ex-post FJR.
\end{theorem}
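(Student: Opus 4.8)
The plan is to view the lottery returned by \BWGCR as the fixed committee~$W_\GCR$ --- which, being the first part of every realization, appears with probability~$1$ --- together with a randomized ``completion'' of size $k_\RD = k-|W_\GCR|$ produced by the \BWMES call of \cref{alg:sUFS+FJR:call-bwmes} on the residual instance with candidate set $C\setminus W_\GCR$. Three things then need checking: (i) the returned object is a genuine lottery over committees of size exactly~$k$; (ii) ex-post FJR; (iii) ex-ante Strong UFS. Claim~(ii) is immediate: \GCR returns an FJR committee~\citep{PPS21a}, and FJR is upward closed (if $W\subseteq W'$ then $|A_i\cap W'|\geq|A_i\cap W|$, and the set of weakly $(\beta,T)$-cohesive groups depends only on the instance), so every realization $W_j\cup W_\GCR\supseteq W_\GCR$ is FJR; moreover $|W_j\cup W_\GCR| = |W_j|+|W_\GCR| = k_\RD+|W_\GCR| = k$ since $W_j\subseteq C\setminus W_\GCR$.

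For~(i) the only real content is that $b_i\geq 0$ for every~$i$ and $\sum_{i\in N}b_i = k_\RD$: given this, the \BWMES subroutine spends exactly $k_\RD$ units over $C\setminus W_\GCR$ while keeping every coordinate at most~$1$ (\cref{ALG:GFS+EJR:fund-arbitrary-candidates}), and its rounding step yields size-$k_\RD$ committees disjoint from~$W_\GCR$. Nonnegativity on $N_\GCR$ is explicit in \cref{alg:sUFS+FJR:b-jz}; the crux is $\sum_{i\in N_\GCR}b_i\leq k_\RD$, since then \cref{alg:sUFS+FJR:b-rd} assigns nonnegative budgets to $N_\RD$ and makes the total equal to~$k_\RD$. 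Writing $a_{jz}:=|N_j^z|\cdot\tfrac kn$, a short computation gives $\sum_{i\in N_\GCR}b_i = \sum_{j=1}^{r}\big(|N_j|\tfrac kn - \sum_{z=1}^{\eta_j}\min\{a_{jz},\beta_j\}\big)$. I would prove the per-step bound $\sum_z\min\{a_{jz},\beta_j\}\geq|T_j|$ by cases. If $|T_j|=\beta_j$, use $\sum_z\min\{a_{jz},\beta_j\}\geq\min\{\sum_z a_{jz},\beta_j\}=\min\{|N_j|\tfrac kn,\beta_j\}=\beta_j$ together with $|N_j|\geq|T_j|\tfrac nk$ (weak cohesiveness). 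If $|T_j|>\beta_j$, then \GCR's tie-break (prefer smaller~$|T|$) forces $a_{jz}<\beta_j$ for every~$z$ --- otherwise, picking any $\beta_j$ candidates $T'\subseteq A\cap T_j$ of a voter $i\in N_j^z$ makes $(N_j^z,T')$ weakly $(\beta_j,T')$-cohesive with the same~$\beta_j$ but $|T'|=\beta_j<|T_j|$, contradicting the step's choice of $(N_j,T_j)$ --- so every $\min$ equals $a_{jz}$ and the sum is $|N_j|\tfrac kn\geq|T_j|$. Summing over~$j$, using that the $N_j$ are pairwise disjoint (so $\sum_j|N_j| = n-|N_\RD|$) and $\sum_j|T_j| = |W_\GCR|$, yields $\sum_{i\in N_\GCR}b_i\leq(n-|N_\RD|)\tfrac kn - |W_\GCR| = k_\RD - |N_\RD|\tfrac kn$; in particular, when $N_\RD\neq\emptyset$ every $i\in N_\RD$ receives $b_i\geq\tfrac kn>0$, a fact reused in~(iii). (The degenerate case $N_\RD=\emptyset$ needs separate treatment: there the residual slots may simply be filled arbitrarily, which harms neither FJR nor the Strong UFS bounds below, as both are monotone in~$W$.)

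For~(iii) I would first establish a structural fact: every \emph{maximal} unanimous group $S\subseteq N$ (voters sharing a common ballot~$A$) is either exactly one of the groups~$N_j^z$ or entirely contained in~$N_\RD$. Indeed, if some $i\in S$ is deactivated at step~$j$ of \GCR, then no member of~$S$ was deactivated earlier (otherwise the larger-$|N'|$ tie-break would have included~$i$ in that earlier group), so all of~$S$ is active at step~$j$; since each member satisfies $|A\cap T_j|\geq\beta_j$, maximality of~$N_j$ forces $S\subseteq N_j$, and $S$ is then one of its maximal unanimous subgroups, i.e.\ $S=N_j^z$. Now fix $i\in S$ and decompose $u_i(\vec p) = |A\cap W_\GCR| + u_i(\vec p')$, where $\vec p'$ is the fractional committee over $C\setminus W_\GCR$ produced by the \BWMES call. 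If $S=N_j^z$: then $|A\cap W_\GCR|\geq|A\cap T_j|\geq\beta_j$, and $S$ is a unanimous group with \emph{uniform} budget $b=\max\{0,\tfrac kn-\tfrac{\beta_j}{|N_j^z|}\}$ inside the \BWMES call, so the argument of the ``Ex-ante Strong UFS'' part of the proof of \Cref{thm:GFS+EJR}, carried out with budget~$b$, gives $u_i(\vec p')\geq\min\{|N_j^z|\,b,\;|A\setminus W_\GCR|\}$. Splitting on~$b$: if $b=0$ then $|A\cap W_\GCR|\geq\beta_j\geq a_{jz}$ and, using $\beta_j\leq|A\cap T_j|$, one checks $|A\cap W_\GCR|\geq\min\{a_{jz},|A|\}$; if $b>0$ then $|A\cap W_\GCR|+|N_j^z|b\geq\beta_j+(a_{jz}-\beta_j)=a_{jz}$, whereas if the inner $\min$ is $|A\setminus W_\GCR|$ one gets $u_i(\vec p)\geq|A\cap W_\GCR|+|A\setminus W_\GCR| = |A|$. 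Either way $u_i(\vec p)\geq\min\{|N_j^z|\tfrac kn,|A|\}$. If instead $S\subseteq N_\RD$: each $i\in S$ has the uniform budget $b_i\geq\tfrac kn$ from~(i), so the same argument yields $u_i(\vec p')\geq\min\{|S|\tfrac kn,|A\setminus W_\GCR|\}$, whence $u_i(\vec p)\geq|A\cap W_\GCR|+\min\{|S|\tfrac kn,|A\setminus W_\GCR|\}\geq\min\{|S|\tfrac kn,|A|\}$. I expect the budget accounting in~(i) --- specifically making $\sum_z\min\{a_{jz},\beta_j\}\geq|T_j|$ rigorous, where it is crucial to exploit \GCR's tie-breaking rule rather than only the value of~$\beta_j$, and disposing of the all-voters-deactivated corner case --- to be the main obstacle; ex-post FJR and the unanimous-group classification are comparatively routine once the viewpoint ``$W_\GCR$ plus a \BWMES completion with carefully chosen budgets'' is adopted.
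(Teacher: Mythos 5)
Your proposal is correct and follows essentially the same route as the paper's proof: the decomposition into $W_\GCR$ plus a \BWMES completion, the budget accounting showing $\sum_{i\in N_\GCR}b_i\leq k_\RD-|N_\RD|\cdot\frac{k}{n}$ (the paper's \Cref{lem:brd-lower-bound}, with your tie-breaking argument for the case $|T_j|>\beta_j$ being exactly \Cref{claim:b-jz:positive} in contrapositive form), the maximal-unanimous-group classification, and the same case analysis for Strong UFS. Your repackaging of the budget bound via $\sum_z\min\{a_{jz},\beta_j\}\geq|T_j|$ and your explicit handling of the $N_\RD=\emptyset$ corner case are minor presentational refinements rather than a different argument.
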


We begin by proving some auxiliary statements related to the budgets~$b_i$ for each~$i \in N$. These statements also provide a rationale for the particular way in which budgets are set in \cref{alg:sUFS+FJR:b-jz} and \cref{alg:sUFS+FJR:b-rd}.
The first two claims concern voters in~$N_\GCR$, for whom the budget is set in \cref{alg:sUFS+FJR:b-jz}.
Our first claim suggests that if a unanimous group is given zero budget, this unanimous group has already received representation amounting to its fair share.

\begin{claim}
\label{claim:b-jz:zero}
$\forall j \in [r], z \in [\eta_j], i \in N_j^z$, $b_i = 0 \iff \beta_j\geq |N_j^z| \cdot \frac{k}{n}$.
\end{claim}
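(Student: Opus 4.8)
The plan is to observe that this claim is an immediate consequence of the way the budget is defined in \cref{alg:sUFS+FJR:b-jz}, so the proof amounts to unfolding that definition and rearranging an inequality. Recall that for each $j \in [r]$, $z \in [\eta_j]$, and $i \in N_j^z$, the algorithm sets
\[
b_i = \max \left\{ 0, \; \frac{k}{n} - \frac{\beta_j}{\vert N_j^z \vert} \right\}.
\]
Since this is a maximum whose first argument is $0$, we have $b_i = 0$ precisely when the second argument is non-positive, i.e.\ when $\frac{k}{n} - \frac{\beta_j}{\vert N_j^z \vert} \leq 0$.

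From here I would simply rearrange: $\frac{k}{n} - \frac{\beta_j}{\vert N_j^z \vert} \leq 0$ is equivalent (multiplying through by the positive quantity $\vert N_j^z \vert$) to $\vert N_j^z \vert \cdot \frac{k}{n} \leq \beta_j$, which is the stated right-hand side $\beta_j \geq \vert N_j^z \vert \cdot \frac{k}{n}$. Chaining these equivalences gives the claim in both directions simultaneously. There is essentially no obstacle here; the only thing worth a word of care is noting that $\vert N_j^z \vert \geq 1$ so the multiplication preserves the direction of the inequality, and that the equivalence is genuinely an ``if and only if'' because each step is reversible. This claim then feeds into the subsequent analysis (together with \Cref{claim:b-jz:zero}'s companion statements) by certifying that a unanimous group receives zero additional budget in the \BWMES phase exactly when its representation $\beta_j$ already meets or exceeds its Strong UFS entitlement $\vert N_j^z \vert \cdot \frac{k}{n}$.
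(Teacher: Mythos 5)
Your proposal is correct and follows exactly the paper's argument: unfold the definition of $b_i$ from \cref{alg:sUFS+FJR:b-jz}, observe that the maximum with $0$ vanishes precisely when $\frac{k}{n} - \frac{\beta_j}{|N_j^z|} \leq 0$, and rearrange. No issues.
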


\begin{proof}
This follows since $b_i = 0 \iff \frac{k}{n} - \frac{\beta_j}{\vert N_j^z \vert} \leq 0 \iff \frac{\vert N_j^z \vert k}{n} \leq \beta_j$.
\end{proof}

Our next claim establishes a connection between~$\beta_j$ and~$|T_j|$ for any weakly cohesive group such that some voter is assigned positive budget.

\begin{claim}
\label{claim:b-jz:positive}
For any~$j \in [r]$, if there exists~$i \in N_j$ such that $b_i > 0$, then $\beta_j = |T_j|$.
\end{claim}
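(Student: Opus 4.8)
The plan is to exploit GCR's tie-breaking rule: among all valid weakly cohesive groups achieving the maximal $\beta$ at a given step, GCR prefers the one with the smallest $|T|$. First observe that $\beta_j \le |T_j|$ always holds, since every $i \in N_j$ satisfies $\beta_j \le |A_i \cap T_j| \le |T_j|$. So it suffices to prove the reverse inequality $|T_j| \le \beta_j$ under the hypothesis that some voter $i \in N_j$ has $b_i > 0$.

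Suppose $i \in N_j^z$ with $b_i > 0$. By the contrapositive of \Cref{claim:b-jz:zero}, $\beta_j < |N_j^z| \cdot \frac{k}{n}$, i.e.\ $|N_j^z| > \beta_j \cdot \frac{n}{k}$. Let $A$ be the common approval set of the voters in $N_j^z$; since $N_j^z \subseteq N_j$ we have $|A \cap T_j| \ge \beta_j$, so we may pick $T' \subseteq A \cap T_j$ with $|T'| = \beta_j$. I then verify that $(N_j^z, T')$ is a legitimate candidate pair in the $j$-th step of GCR: (i) every voter in $N_j^z$ is active at step $j$, because $N_j^z \subseteq N_j$ and the voters in $N_j$ are deactivated only at the end of step $j$; (ii) $T' \subseteq T_j \subseteq C \setminus W$, where $W$ is the partial committee before step $j$, so $T'$ avoids already-selected candidates; (iii) for every $i' \in N_j^z$, $|A_{i'} \cap T'| = |T'| = \beta_j$ since $T' \subseteq A$; and (iv) $|N_j^z| > \beta_j \cdot \frac{n}{k} = |T'| \cdot \frac{n}{k}$. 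Hence $N_j^z$ is weakly $(\beta_j, T')$-cohesive, and the largest $\beta$ for which it is weakly $(\beta, T')$-cohesive is exactly $\min_{i' \in N_j^z} |A_{i'} \cap T'| = \beta_j$, which by definition is the maximal value attainable at step $j$. Since GCR chose $(N_j, T_j)$ and breaks ties toward smaller $|T|$, we must have $|T_j| \le |T'| = \beta_j$. Together with $\beta_j \le |T_j|$ this gives $\beta_j = |T_j|$.

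The main subtlety to be careful about is ensuring $(N_j^z, T')$ is a genuine competitor for GCR's choice at step $j$: one must confirm its voters are still active and $T'$ lies outside the current committee, both of which follow from $N_j^z \subseteq N_j$ and $T' \subseteq T_j$. A secondary point is reading off the ``$\beta$-value'' associated with a pair $(N', T)$ as $\min_{i \in N'} |A_i \cap T|$ and checking it equals $\beta_j$ for the constructed pair, which is immediate from $T' \subseteq A$. No computation beyond these set-size inequalities is needed.
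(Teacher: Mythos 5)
Your proof is correct and follows essentially the same route as the paper's: construct a competitor pair $(N_j^z, T')$ with $T' \subseteq A \cap T_j$ of size $\beta_j$, verify it is weakly $(\beta_j, T')$-cohesive using $|N_j^z| > \beta_j \cdot \frac{n}{k}$, and invoke GCR's tie-breaking toward smaller $|T|$ to force $|T_j| \le \beta_j$. Your explicit check of the easy direction $\beta_j \le |T_j|$ and of the competitor's legitimacy (active voters, unselected candidates) is a slightly more careful write-up of the same argument.
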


\begin{proof}
Fix any $j \in [r], z \in [\eta_j], i \in N_j^z$ such that $b_i > 0$.
We have
\[
b_i > 0 \implies \frac{k}{n} - \frac{\beta_j}{|N_j^z|} > 0 \implies |N_j^z| > \beta_j \cdot \frac{n}{k}.
\]

Let $A_j^z$ denote the approval set of voters~$N_j^z$.
Since $N_j$ is weakly $(\beta_j, T_j)$-cohesive, by definition, $|A_j^z \cap T_j| \geq \beta_j$.
Now, let $T_j^z \subseteq A_j^z \cap T_j$ be of size exactly~$\beta_j$, i.e., $|T_j^z| = \beta_j$.
It can be verified that $N_j^z$ is weakly $(\beta_j, T_j^z)$-cohesive.
Thus, it must be that $T_j^z = T_j$, since otherwise $|T_j^z| < |T_j|$ and \GCR would have selected~$N_j^z$, $\beta_j$ and~$T_j^z$ in step~$j$ according to the tie-breaking rule.
We conclude that $\beta_j = |T_j|$.
\end{proof}

While it is apparent from \cref{alg:sUFS+FJR:b-jz} that each voter~$i \in N_\GCR$ gets non-negative budget~$b_i$, it is unclear at first glance whether voters~$N_\RD$ also receive non-negative budgets.
Below, we show a (stronger) lower bound on the budgets for~$N_\RD$.

\begin{lemma}
\label{lem:brd-lower-bound}
For all~$i \in N_\RD$, $b_i \geq \frac{\committeeSize}{n}$.
\end{lemma}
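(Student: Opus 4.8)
The plan is to unfold the definition of $b_i$ for $i \in N_\RD$ given in \cref{alg:sUFS+FJR:b-rd} and reduce the claimed bound to a single inequality for each step of \GCR. Since every $i \in N_\RD$ receives the same budget $b_i = \frac{1}{|N_\RD|}\bigl(k_\RD - \sum_{l \in N_\GCR} b_l\bigr)$, the inequality $b_i \geq k/n$ is equivalent to $k_\RD - \sum_{l \in N_\GCR} b_l \geq |N_\RD|\cdot\frac{k}{n}$. Now substitute $k_\RD = k - |W_\GCR| = k - \sum_{j \in [r]} |T_j|$ and $|N_\RD| = n - \sum_{j \in [r]} |N_j|$, using that the groups $N_1, \dots, N_r$ are pairwise disjoint (each step of \GCR deactivates its group) and likewise the sets $T_1, \dots, T_r$ are pairwise disjoint. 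After cancelling the common term~$k$, the inequality rearranges to $\sum_{j \in [r]}\bigl(|N_j|\cdot\frac{k}{n} - \sum_{l \in N_j} b_l\bigr) \geq \sum_{j \in [r]} |T_j|$, so it suffices to prove the per-step inequality
\[
|N_j|\cdot\frac{k}{n} - \sum_{l \in N_j} b_l \;\geq\; |T_j| \qquad \text{for every } j \in [r].
\]

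For the per-step inequality I would first put its left-hand side in closed form. All voters of a unanimous group $N_j^z$ receive budget $\max\{0, \frac{k}{n} - \frac{\beta_j}{|N_j^z|}\}$, so $\sum_{l \in N_j^z} b_l = \max\{0,\ |N_j^z|\frac{k}{n} - \beta_j\}$, and using the elementary identity $x - \max\{0, x - \beta\} = \min\{x, \beta\}$ we get
\[
|N_j|\cdot\frac{k}{n} - \sum_{l \in N_j} b_l \;=\; \sum_{z \in [\eta_j]} \min\Bigl\{|N_j^z|\cdot\frac{k}{n},\ \beta_j\Bigr\}.
\]
Now split into two cases. If $b_l = 0$ for every $l \in N_j$, the right-hand side above equals $|N_j|\cdot\frac{k}{n}$, which is at least $|T_j|$ because $N_j$ is weakly $(\beta_j, T_j)$-cohesive and hence $|N_j| \geq |T_j|\cdot n/k$. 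Otherwise some $l \in N_j$ has $b_l > 0$; if $l \in N_j^{z^\ast}$ then $|N_j^{z^\ast}|\cdot\frac{k}{n} > \beta_j$, so that summand alone contributes $\beta_j$ while all other summands are non-negative, giving $\sum_z \min\{|N_j^z|\frac{k}{n}, \beta_j\} \geq \beta_j$; and by \Cref{claim:b-jz:positive} the presence of a positively-budgeted voter in $N_j$ forces $\beta_j = |T_j|$, so the per-step bound follows.

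I expect the only delicate point to be the bookkeeping in the first paragraph: verifying that the global inequality telescopes exactly into the sum of per-step inequalities (this uses disjointness of the $N_j$'s and of the $T_j$'s, together with $\sum_j |N_j| = |N_\GCR|$ and $\sum_j |T_j| = |W_\GCR|$), and noticing that the hypothesis ``$\exists\, l \in N_j$ with $b_l > 0$'' is precisely what \Cref{claim:b-jz:positive} needs in order to conclude $\beta_j = |T_j|$. The remaining ingredients---the identity $x - \max\{0, x-\beta\} = \min\{x,\beta\}$ and the weak-cohesiveness inequality $|N_j| \geq |T_j|\cdot n/k$---are routine, so no genuinely hard step is anticipated.
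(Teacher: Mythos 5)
Your proof is correct. It performs the same initial reduction as the paper---dividing out $|N_\RD|$ and showing $k_\RD - \sum_{l\in N_\GCR} b_l \geq |N_\RD|\cdot \frac{k}{n}$---and relies on the same three ingredients: \Cref{claim:b-jz:positive} to convert $\beta_j$ into $|T_j|$ when some voter in~$N_j$ has positive budget, the weak-cohesiveness bound $|N_j| \geq |T_j|\cdot \frac{n}{k}$ when none does, and the discarding of the non-negative contributions of the remaining unanimous groups. The difference is organizational: the paper proves the global inequality in one long chain after re-ordering the cohesive groups (and the unanimous groups within each) so that the positively-budgeted ones come first, whereas you telescope the global bound into the per-step inequality $|N_j|\cdot\frac{k}{n} - \sum_{l\in N_j} b_l \geq |T_j|$ and dispatch each step locally via the identity $x - \max\{0,x-\beta\} = \min\{x,\beta\}$. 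Your version avoids the paper's re-indexing bookkeeping (the index~$t$ and the counts~$\eta'_j$) and makes transparent which hypothesis is used where (cohesiveness in the all-zero case, \Cref{claim:b-jz:positive} in the positive case), at the cost of a little more setup; the paper's single computation is more compact but harder to audit. Two small points to make explicit if you write this up: the summands $\min\bigl\{|N_j^z|\cdot\frac{k}{n}, \beta_j\bigr\}$ are non-negative because $\beta_j$ is a positive integer by the definition of weak cohesiveness, and the case $N_\RD=\emptyset$ is vacuous so the division by $|N_\RD|$ is harmless.
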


\begin{proof}
Since each~$i \in N_\RD$ is given the same amount of budget in \cref{alg:sUFS+FJR:b-rd}, it suffices to show that $k_\RD - \sum_{l \in N_\GCR} b_l \geq |N_\RD| \cdot \frac{\committeeSize}{n}$.

For ease of exposition, in this proof, we \emph{re-order} the~$r$ weakly cohesive groups encountered by \GCR in \cref{alg:sUFS+FJR:gcr}.
Let~$t \in \{0, 1, \dots, r\}$ be an index such that all voters in~$\bigcup_{j = t+1}^r N_j$ receive zero budget in \cref{alg:sUFS+FJR:b-jz}, i.e., for all~$i \in \bigcup_{j = t+1}^{r} N_j$, $b_i = 0$.
Moreover, for each~$N_j$, we let the first~$\eta'_j \in \{0, 1, 2, \dots, \eta_j\}$ unanimous groups be the ones which receive positive budget in \cref{alg:sUFS+FJR:b-jz}.
Note that $\eta_j^\prime \geq 1$ for $j\in[t]$ due to how we re-order the weakly cohesive groups.

We now proceed to show the inequality, where the second transition is due to how we re-order the weakly cohesive groups as well as how we label the unanimous groups within each group:
\allowdisplaybreaks
\begin{align*}
\committeeSize_\RD - \sum_{l \in N_\GCR} b_l
&= \committeeSize_\RD - \sum_{l \in N_\GCR} \max \left\{ 0, \frac{\committeeSize}{n} - \frac{\beta_j}{|N_j^z|} \right\} \\
&= \committeeSize_\RD - \sum_{j = 1}^t \sum_{z = 1}^{\eta'_j} |N_j^z| \cdot \left( \frac{\committeeSize}{n} - \frac{\beta_j}{|N_j^z|} \right) \\
&= \committeeSize_\RD - \frac{\committeeSize}{n} \cdot \sum_{j = 1}^t \sum_{z = 1}^{\eta'_j} |N_j^z| + \sum_{j = 1}^t \sum_{z = 1}^{\eta'_j} \beta_j \\
&= \committeeSize_\RD - \frac{\committeeSize}{n} \cdot \sum_{j = 1}^t \sum_{z = 1}^{\eta'_j} |N_j^z| + \sum_{j = 1}^t \sum_{z = 1}^{\eta'_j} |T_j| \tag{$\because$ \Cref{claim:b-jz:positive}} \\
&\geq \committeeSize_\RD - \frac{\committeeSize}{n} \cdot \sum_{j = 1}^t \sum_{z = 1}^{\eta'_j} |N_j^z| + \sum_{j = 1}^t |T_j| \tag{$\because \eta'_j \geq 1$} \\
&= \committeeSize_\RD - \frac{\committeeSize}{n} \cdot \left( |N_\GCR| - \sum_{j = 1}^t \sum_{z = \eta'_j + 1}^{\eta_j} |N_j^z| - \sum_{j = t + 1}^r |N_j| \right) + \sum_{j = 1}^t |T_j| \\
&\geq \committeeSize_\RD - \frac{\committeeSize}{n} \cdot |N_\GCR| + \frac{\committeeSize}{n} \cdot \sum_{j = t + 1}^r |N_j| + \sum_{j = 1}^t |T_j| \\
&\geq \committeeSize_\RD - \frac{\committeeSize}{n} \cdot |N_\GCR| + \sum_{j = t+1}^r |T_j| + \sum_{j = 1}^t |T_j| \tag{$\because |N_j| \geq |T_j| \cdot \frac{n}{\committeeSize}$} \\
&= \committeeSize - |W_\GCR| - \frac{\committeeSize}{n} \cdot |N_\GCR| + \sum_{j = 1}^r |T_j| \\
&= \frac{\committeeSize}{n} \cdot (n - |N_\GCR|)
= |N_\RD| \cdot \frac{\committeeSize}{n}.
\qedhere
\end{align*}
\end{proof}

We are now ready to establish \Cref{thm:sUFS+FJR}.

\begin{proof}[Proof of \Cref{thm:sUFS+FJR}]
We start by showing that the randomized committee~$(\boldsymbol{\lambda}, \mathbf{W})$ (and the fractional committee~$\vec{p}$ that it implements) returned by \Cref{alg:sUFS+FJR} is feasible and each of the integral committees in the support satisfies ex-post FJR.
First of all, we have $|W_\GCR| \leq \committeeSize$~\citep{PPS21a}.
Next, note that $\sum_{i \in N} b_i = \sum_{i \in N_\GCR} b_i + \sum_{i \in N_\RD} b_i = \sum_{i \in N_\GCR} b_i + k_\RD - \sum_{i\in N_\GCR} b_i = k_\RD$.
By the feasibility reasoning used in the proof of \Cref{thm:GFS+EJR}, each integral committee in the randomized committee returned by \cref{alg:sUFS+FJR:call-bwmes} of \Cref{alg:sUFS+FJR} is of size~$k_\RD$.
Lastly, the randomized committee~$(\boldsymbol{\lambda}, \mathbf{W})$ is composed of outcomes of size $k_\RD + |W_\GCR| = k - |W_\GCR| + |W_\GCR| = k$ since each $W_j$ for $j\in [s]$ is a committee of candidates disjoint from $W_\GCR$.
The fact that $(\boldsymbol{\lambda}, \mathbf{W})$ satisfies ex-post FJR is immediate since each integral committee includes $W_\GCR$, which satisfies FJR \citep{PPS21a}.

For the remainder of the proof, we will show that the fractional committee~$\vec{p}$ satisfies ex-ante Strong UFS.
We let~$y_{ij}$ denote the amount voter $i\in N$ spent on candidate $j \in C \setminus W_\GCR$ in the call to \BWMES (\Cref{alg:GFS+EJR}) in \cref{alg:sUFS+FJR:call-bwmes} of \Cref{alg:sUFS+FJR} and refer to the integral committee obtained during the \MES portion (i.e., \cref{ALG:GFS+EJR:MES}) of \Cref{alg:GFS+EJR} as~$W_\MES$.

We note that each unanimous group in~$N$ either forms a subset of or is disjoint from \emph{any} weakly cohesive group encountered in \cref{alg:sUFS+FJR:gcr} of \Cref{alg:sUFS+FJR}, because for any unanimous group partially included in a weakly $(\beta, T)$-cohesive group, the excluded voters can be added to form a weakly $(\beta, T)$-cohesive group of strictly greater size.
In the following, we show Strong UFS is satisfied for unanimous groups in~$N_\GCR$ and in~$N_\RD$ separately, and begin with voters~$N_\RD$.
Fix any unanimous group~$S \subseteq N_\RD$ and denote their approval set as~$A_S$, i.e., $A_S = A_i$ for all~$i \in S$.
If $A_S \subseteq W_\GCR \cup W_\MES$, then $\sum_{c \in A_S} p_c = |A_S| \geq \min \left\{ |S| \cdot \frac{k}{n}, |A_S| \right\}$, meaning that voters~$S$ are satisfied with respect to Strong UFS.
Otherwise, $A_S \setminus (W_\GCR \cup W_\MES) \neq \emptyset$, and every voter in~$S$ will spend their entire budget on approved candidates during the call to \Cref{alg:GFS+EJR}.
As such, for each~$i \in S$, it holds that
\[
u_i(\vec{p}) = \sum_{c \in A_S} p_c \geq \sum_{c \in A_S} \sum_{i \in N} y_{ic} \geq \sum_{i \in S} \sum_{c \in A_S} y_{ic} = \sum_{i \in S} b_i \geq |S| \cdot \frac{\committeeSize}{n},
\]
where the last transition is due to \Cref{lem:brd-lower-bound}.

We now proceed to show that each unanimous group in $N_\GCR$ is satisfied with respect to Strong UFS.
Fix any unanimous group~$N_j^z$ belonging to the weakly cohesive group encountered in the $j$-th step of \GCR.
First, let the constant~$b$ be such that $b = b_i$ for all~$i \in N_j^z$.
We note that such a constant exists since the definition of~$b_i$ in \cref{alg:sUFS+FJR:b-jz} depends only on~$j$ and~$z$.
If $b = 0$, then by \Cref{claim:b-jz:zero}, we have the following for all~$i \in N_j^z$:
\[
u_i(\vec{p}) \geq |A_i\cap W_\GCR| \geq \beta_j \geq |N_j^z| \cdot \frac{k}{n}.
\]
Lastly, we turn on our attention to the case in which $b > 0$.
Fix~$i \in N_j^z$.
If $A_i \subseteq (W_\GCR \cup W_\MES)$, we have Strong UFS.
Otherwise, voters in $N_j^z$ spend their budgets~$b_i$ entirely on approved candidates.
Thus,
\[
u_i(\vec{p}) \geq \beta_j + \sum_{l \in N_j^z} b_l = \beta_j + |N_j^z| \cdot \left( \frac{k}{n} - \frac{\beta_j}{|N_j^z|} \right) = |N_j^z| \cdot \frac{k}{n}.
\qedhere
\]
\end{proof}

Given the main result of \Cref{sec:GFS+EJR}, it is natural to wonder whether \Cref{alg:sUFS+FJR} satisfies GFS.
We show below that this is not the case.

\begin{example}[\Cref{alg:sUFS+FJR} does not guarantee GFS]
Consider an instance with $n = 3$, $\committeeSize = 2$, and the following approval preferences:
\[
\approvalBallotOfAgent{1} = \{c_1, c_2\} \qquad
\approvalBallotOfAgent{2} = \{c_1, c_3\} \qquad
\approvalBallotOfAgent{3} = \{c_4\}.
\]

Since $\committeeSize = 2$, it must be that $|T| \leq 2$ for any weakly $(\beta, T)$-cohesive group.
It can be verified that for any~$S \subseteq [3]$ and~$T \subseteq \{c_1, c_2, c_3, c_4\}$, the group of voters~$S$ is not weakly $(2, T)$-cohesive.
This holds since such a group would necessarily be of size~$n$ and it is apparent that there is no set of two candidates on which all voters agree.
Note, however, that the group of voters~$\{1, 2\}$ is $1$-cohesive for~$c_1$ (and thus also weakly $(1, \{c_1\})$-cohesive).

The \GCR portion of \Cref{alg:sUFS+FJR} adds voters~$\{1, 2\}$ to~$N_\GCR$, includes candidate~$\{c_1\}$ in~$W_\GCR$, and terminates since no weakly cohesive groups remain.
Since voters~$\{1, 2\}$ have distinct preferences and thus form unanimous groups of size one, from \cref{alg:sUFS+FJR:b-jz}, we have for each~$i \in \{1, 2\}$ that $b_i = \max \{0, 2/3 - 1\} = 0$.
Thus, $b_3 = 1$ due to \cref{alg:sUFS+FJR:b-rd}, which will be spent on~$c_4$, the only candidate she approves.
The resulting fractional committee is $\vec{p} = (1, 0, 0, 1)$, which does not satisfy GFS with respect to the group~$S = \{1, 2\}$ since:
\[
1 = \sum_{c \in \bigcup_{i \in S} \approvalBallotOfAgent{i}} p_c < \sum_{i \in S} \frac{1}{n} \cdot \min \{\committeeSize, |\approvalBallotOfAgent{i}|\} = \frac{4}{3}.
\]
\end{example}

Since \Cref{alg:sUFS+FJR} does not satisfy GFS, we may look to the analysis of \Cref{thm:GFS+EJR} for inspiration.
There, we used the ingredient of voter-specific candidate payments~$y_{ij}$, which are critically not part of the output of \GCR.
It then seems intriguing to consider whether we can retrospectively determine such payments for~$W_\GCR$.
These payments bear a close resemblance to what is called a \emph{price system}, and are highly related to an axiom called \emph{priceability}~\citep{PeSk20a}.
Indeed, it is known that a price system can always be constructed for the output from \GCR \citep{PPS21a}.
While this seems promising, the mere existence of a price system does not immediately provide a non-zero lower bound on the total amount each voter contributes to approved candidates, as shown in \Cref{claim:LB-spending} and used to prove \Cref{thm:GFS+EJR}.
Thus, it does not immediately follow from the same argument given for \Cref{thm:GFS+EJR} that \GCR can be completed to a GFS committee.
It remains an intriguing open question whether there exists a randomized committee which satisfies ex-ante GFS and ex-post FJR.

\section{Conclusion}

In this work, we have initiated the best-of-both-worlds paradigm in the context of committee voting, which allows us to achieve both ex-ante and ex-post fairness.
We first generalized \emph{fair share} axioms from the single-winner probabilistic voting literature, and subsequently gave algorithms which satisfied these properties in conjunction with existing ex-post fairness properties. As mentioned, an immediate open question left by our work is the compatibility of ex-ante GFS and ex-post FJR.
More broadly, our work motivates the question of whether we can achieve other ex-ante properties such as Pareto efficiency or fractional core~\citep{FGM16,CJMW20} while ensuring standard ex-post properties such as JR, PJR and EJR.

Approval-based committee voting is but one of many social choice settings of interest.
Others include multiple referenda~\citep{BKZ97a}, public decision making~\citep{CFS17a}, and participatory budgeting~\citep{AzSh20a,ReMa23}.
What new challenges does implementation present in more complex settings such as participatory budgeting, which involves candidate costs and budget constraints?
We hope that our work serves as an invitation for further research applying the best-of-both-worlds perspective to social choice problems.

\section*{Acknowledgements}

This work was partially supported by the NSF-CSIRO grant on ``Fair Sequential Collective Decision-Making'' and the ARC Laureate Project FL200100204 on ``Trustworthy AI''.
We would like to thank the anonymous reviewers for their valuable feedback.

\bibliographystyle{plainnat}
\bibliography{bibliography}

\clearpage
\appendix

\section{Relations Between Ex-ante Fairness Notions (Cont.)}
\label{app:relations-ex-ante-fairness}

We continue our discussions regarding to the relations between the proposed ex-ante fairness notions.
First of all, recall that Random Dictator satisfies GFS but not Strong IFS.
Therefore, GFS (and thus UFS) does not imply Strong IFS, let alone Strong UFS.

Next, Strong IFS does not imply UFS, which can be seen from the following example.

\begin{example}[Strong IFS does not imply UFS]
Consider an instance with $n = 3$, $k = 2$, and approval preferences
\[
A_1 = A_2 = \{c_1, c_2\} \qquad A_3 = \{c_3, c_4\}.
\]
Observe that the fractional committee~$\vec{p} = \left( \frac{2}{3}, 0, 1, \frac{1}{3} \right)$ satisfies Strong IFS.
However, $\vec{p}$ does not satisfy UFS with respect to the group~$S = \{1, 2\}$ since
\[
\frac{2}{3} = \sum_{c \in \{c_1, c_2\}} p_c < \frac{|S|}{n} \cdot \min \{k, |A_i|\} = \frac{4}{3}.
\]
\end{example}

Finally, we show Strong UFS (and thus Strong IFS) does not imply GFS.
It suffices to show that Strong UFS does not imply GFS.

\begin{example}[Strong UFS does not imply GFS]
Consider an instance with $n = 3$, $k=2$, and the following approval preferences:
\[
A_1 = \{c_1, c_2\} \qquad A_2 = \{c_1, c_3\} \qquad A_3 = \{c_4, c_5\}.
\]
Observe that the fractional committee~$\vec{p} = \left( \frac{2}{3}, 0, 0, 1, \frac{1}{3} \right)$ satisfies Strong UFS.
However, $\vec{p}$ does not satisfy GFS with respect to the group $S = \{1, 2\}$ since
\[
\frac{2}{3} = \sum_{c \in \bigcup_{i \in S} \approvalBallotOfAgent{i}} p_c < \sum_{i \in S} \frac{1}{n} \cdot \min \{\committeeSize, |\approvalBallotOfAgent{i}|\} = \frac{4}{3}.
\]
\end{example}
\end{document}